\documentclass[10pt]{article}
\usepackage{}
\usepackage{amssymb}
\usepackage{amsfonts}
\usepackage{bbm}
\usepackage{mathrsfs}
\usepackage{mathrsfs}
\usepackage{longtable}
\usepackage{supertabular}
\textheight=8.5in \topmargin=0.2in \textwidth=6.0in
\oddsidemargin=0.20in

\usepackage{amsfonts,mathrsfs,latexsym,amsmath,amssymb,amsthm}

\newtheorem{theorem}{Theorem}
\newtheorem{proposition}{Proposition}
\newtheorem{lemma}{Lemma}

\newtheorem{definition}{Definition}
\newtheorem{example}{Example}

\begin{document}
\title{\bf{Self-dual codes and quadratic residue codes over the ring $\mathbb{Z}_9+u\mathbb{Z}_9$}}
\author{{\bf  Jian Gao, XianFang Wang, Fang-Wei Fu}\\
 {\footnotesize \emph{Chern Institute of Mathematics and LPMC, Nankai University}}\\
  {\footnotesize  \emph{Tianjin, 300071, P. R. China}}}
\date{}

\maketitle \noindent {\small {\bf Abstract} In this paper, we introduce a new definitions of the Gray weight and the Gray map for linear codes over $\mathbb{Z}_9+u\mathbb{Z}_9$ with $u^2=u$. Some results on self-dual codes over this ring are investigated. Further, the structural properties of quadratic residue codes are also considered. Two self-dual codes with parameters $[22,11,5]$ and $[24,12,9]$ over $\mathbb{Z}_9$ are obtained.}
 \vskip 1mm

\noindent
 {\small {\bf Keywords} Gray weight; self-dual codes;  quadratic residue codes.}

\vskip 3mm \noindent {\bf Mathematics Subject Classification (2000) } 11T71 $\cdot$ 94B05 $\cdot$ 94B15

\vskip 3mm \baselineskip 0.2in

\section{Introduction}

Codes over finite rings have been studied since the early 1970s. There are a lot of works on codes over finite rings after the discovery that certain good nonlinear binary codes can be constructed from cyclic codes over $\mathbb{Z}_4$ via the Gray map \cite{Hammons}. Since then, many researchers have payed more and more attentions to study the codes over finite rings. In these studies, the group rings associated with codes are finite chain rings. Recently, Zhu et al. considered linear codes over the finite non-chain ring $\mathbb{F}_q+v\mathbb{F}_q$. In \cite{Zhu1}, they studied the cyclic codes over $\mathbb{F}_2+v\mathbb{F}_2$. It has shown that cyclic codes over this ring are principally generated. In the subsequent paper \cite{Zhu2}, they investigated a class of constacyclic codes over $\mathbb{F}_p+v\mathbb{F}_p$. In that paper, the authors proved that the image of a $(1-2v)$-constacyclic code of length $n$ over $\mathbb{F}_p+v\mathbb{F}_p$ under the Gray map is a cyclic code of length $2n$ over $\mathbb{F}_p$. Furthermore, they also asserted that $(1-2v)$-constacyclic codes over $\mathbb{F}_p+v\mathbb{F}_p$ are also principally generated.  More recently, Yildiz and Karadeniz \cite{Yildiz} studied the linear codes over the non-principal ring $\mathbb{Z}_4+u\mathbb{Z}_4$, where $u^2=0$. They introduced the MacWilliams identities for the complete, symmetrized and Lee weight enumerators. They also gave three methods to construct formally self-dual codes over $\mathbb{Z}_4+u\mathbb{Z}_4$.
\par
Self-dual codes are an important class of linear codes. They have connections to many fields of research such as lattices, designs and invariant \cite{Bannai,Cengellenmis}. The study of the connections with unimodular lattices has generated a lot of interests among the coding theory, such as coding theory on the rings $\mathbb{Z}_4$, $\mathbb{Z}_8$, $\mathbb{Z}_9$, $\mathbb{Z}_{2k}$ and $\mathbb{Z}_{2^m}$. In \cite{Balmaceda}, the mass formula for self-dual codes and a classification of self-dual codes of small lengths over $\mathbb{Z}_9$ were given.  They classified such codes of lengths up to $n = 8$ over the ring $\mathbb{Z}_9$. Self-dual codes of larger length over $\mathbb{Z}_9$ may be also interesting, and this is the main motivation to construct self-dual codes from quadratic residue codes in this paper.
\par
As a special class of cyclic codes, quadratic residue codes fall into the family of BCH codes and have proven to be a promising family of cyclic codes. They were first introduced by Gleason and since then have generated a lot of interests. This due to the fact that they enjoy good algebra properties and they contain source of good codes. Recently, quadratic residue codes over some special finite rings were introduced by many researchers \cite{Kaya1,Kaya2,Taeri}, and they also produced some good self-dual codes over finite fields \cite{Kaya1,Kaya2}.
\par
In this paper, we mainly study quadratic residue codes over the ring $R=\mathbb{Z}_9+u\mathbb{Z}_9$, where $u^2=u$. In Section 2, we define the Lee weight of the element of $R$, and introduce a Gray map. This map leads to some useful results on linear codes over $R$. In Section 3, we study some structural properties of quadratic residue codes over $R$. Further, some self-dual codes of larger length over $\mathbb{Z}_9$ are obtained from quadratic residue codes over $R$.
\section{Preliminaries }
Let $R=\mathbb{Z}_9+u\mathbb{Z}_9$, where $u^2=u$. Then $R$ is commutative and with characteristic $9$. Clearly, $R\simeq \mathbb{Z}_9[u]/(u^2-u)$. Moreover, $u$ and $1-u$ are primitive idempotents. Then any element $r$ of $R$ can be expressed uniquely as $r=ua+(1-u)b$. This expression is called the decomposition expression of $r$. The ring $R$ has the following properties: there are $9$ different ideals of $R$ and they are $(1)$, $(u+2)$, $(u+3)$, $(u-1)$, $(3)$, $(u)$, $(3u-3)$, $(3u)$, $(0)$; $R$ is a principal ring; $(u+2)$ and $(u+3)$ are the maximal ideals of $R$; $R$ is not a finite chain ring. Furthermore, for any element $r=ua+(1-u)b$ of $R$, $r$ is a unit if and only if $a\not\equiv 0 ({\rm mod}3)$ and $b\not\equiv 0 ({\rm mod}3)$.
\par
Let $A$ be an element of ${\rm GL}_2(\mathbb{Z}_9)$,  i.e. $A$ is an invertible matrix of order $2$ over $\mathbb{Z}_9$. For any element $r=ua+(1-u)b$, we define the following map
\begin{equation*}
\begin{split}
\Phi:~~R& \rightarrow \mathbb{Z}_9^2\\
 r=ua+(1-u)b&\mapsto (a,b)A.
 \end{split}
 \end{equation*}
For simplicity, we write $(a,b)A$ as $rA$. The map $\Phi$ defined above can also extended to $R^n$ as follows.
\begin{equation*}
\begin{split}
\Phi:~~R^n& \rightarrow \mathbb{Z}_9^{2n}\\
(c_0,c_1,\ldots,c_{n-1})&\mapsto (c_0A,c_1A,\ldots,c_{n-1}A).
 \end{split}
 \end{equation*}
\begin{definition}
The map $\Phi$ defined above is called the Gray map from $R^n$ to $\mathbb{Z}_9^{2n}$. Let $r=ua+(1-u)b$ be any element of $R$. Then the Lee weight of $r$ is defined as
   \begin{equation*}
   w_L(r)=w_H(rA).
   \end{equation*}
\end{definition}
\par
Define the Lee weight of a vector $\textbf{c}=(c_0, c_1, \ldots, c_{n-1})\in R^n$ to be the rational sum of the Lee weight of its components, i.e. $w_L(\textbf{c})=\sum_{i=0}^{n-1}w_L(c_i)$. For any elements $\textbf{c}_1, \textbf{c}_2 \in R^n$, the Lee distance is given by $d_L(\textbf{c}_1, \textbf{c}_2)=w_L(\textbf{c}_1- \textbf{c}_2)$.
\par
A code $\mathcal {C}$ of length $n$ over $R$ is a nonempty subset of $R^n$. $\mathcal {C}$ is a linear code if and only if $\mathcal {C}$ is an $R$-submodule of $R^n$. The minimum Lee distance  of $\mathcal {C}$ is the smallest nonzero Lee distance between all pairs of distinct codewords. The minimum Lee weight of $\mathcal {C}$ is the smallest nonzero Lee weight among all codewords. If $\mathcal {C}$ is a linear code, then the minimum Lee distance is the same as the minimum Lee weight.
\par

It is well known that the Hamming weight of the element $a$ of $\mathbb{Z}_9$ is defined as $$w_H(r)=1~{\rm if}~r\neq 0, ~{\rm or}~0~{\rm otherwise}.$$ Then, by the Definition 1, we have the following result directly.

\begin{proposition}
The Gray map $\Phi$ is a distance-preserving map from $R^n$ (Lee distance) to $\mathbb{Z}_9^{2n}$ (Hamming distance) and it is also $\mathbb{Z}_9$-linear.
\end{proposition}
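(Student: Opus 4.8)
The plan is to factor the Gray map through the coordinate map induced by the idempotent decomposition, prove the $\mathbb{Z}_9$-linearity first, and then deduce distance preservation from linearity together with the componentwise definition of the Lee weight. For the linearity, the key structural fact is that $u$ and $1-u$ are orthogonal idempotents: $u^2=u$, $(1-u)^2=1-u$, and $u(1-u)=u-u^2=0$. Consequently the map $\pi\colon R\to\mathbb{Z}_9^2$ sending $r=ua+(1-u)b$ to the coordinate pair $(a,b)$ is well defined (the decomposition expression is unique) and $\mathbb{Z}_9$-linear: for $\alpha\in\mathbb{Z}_9$ and $r=ua+(1-u)b$, $s=ua'+(1-u)b'$, one checks that $\alpha r+s=u(\alpha a+a')+(1-u)(\alpha b+b')$, so $\pi(\alpha r+s)=\alpha\pi(r)+\pi(s)$. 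Since $\Phi$ on $R$ is exactly $\pi$ followed by right multiplication by the fixed matrix $A\in\mathrm{GL}_2(\mathbb{Z}_9)$, and right multiplication by $A$ is a $\mathbb{Z}_9$-linear endomorphism of $\mathbb{Z}_9^2$, the composite is $\mathbb{Z}_9$-linear on $R$. Applying this coordinatewise gives the $\mathbb{Z}_9$-linearity of the extended map $\Phi\colon R^n\to\mathbb{Z}_9^{2n}$. It is worth emphasizing that only $\mathbb{Z}_9$-linearity is claimed: $\Phi$ is not $R$-linear, since the target $\mathbb{Z}_9^{2n}$ carries no $R$-module structure compatible with multiplication by $u$.

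Next I would establish distance preservation by reducing it to an identity between weights. Because both distances are defined through the weight of a difference, linearity (indeed additivity alone) gives
$$
d_H\bigl(\Phi(\mathbf{c}_1),\Phi(\mathbf{c}_2)\bigr)=w_H\bigl(\Phi(\mathbf{c}_1)-\Phi(\mathbf{c}_2)\bigr)=w_H\bigl(\Phi(\mathbf{c}_1-\mathbf{c}_2)\bigr),
$$
so it suffices to prove $w_H(\Phi(\mathbf{c}))=w_L(\mathbf{c})$ for every $\mathbf{c}\in R^n$. Writing $\mathbf{c}=(c_0,\ldots,c_{n-1})$, the image $\Phi(\mathbf{c})=(c_0A,\ldots,c_{n-1}A)$ splits into $n$ blocks of length $2$, and the Hamming weight of a vector is the sum of the Hamming weights of its blocks; hence $w_H(\Phi(\mathbf{c}))=\sum_{i}w_H(c_iA)$. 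By Definition 1 each term equals $w_L(c_i)$, and $\sum_i w_L(c_i)=w_L(\mathbf{c})$ by the definition of the Lee weight of a vector. Combining the two reductions yields $d_H(\Phi(\mathbf{c}_1),\Phi(\mathbf{c}_2))=w_L(\mathbf{c}_1-\mathbf{c}_2)=d_L(\mathbf{c}_1,\mathbf{c}_2)$, which is the distance-preserving property.

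I expect no genuinely hard step here; the argument is essentially bookkeeping built on the orthogonal-idempotent decomposition. The one point requiring a little care — the closest thing to an obstacle — is verifying cleanly that $\pi$ is $\mathbb{Z}_9$-linear rather than merely additive, i.e.\ that scalar multiplication by $\mathbb{Z}_9$ passes through to the coordinates; this is where the uniqueness of the decomposition expression and the relations $u^2=u$, $u(1-u)=0$ are actually used. Everything else follows formally from the block structure of $\Phi$ and the componentwise definitions of the Lee and Hamming weights.
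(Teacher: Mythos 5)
Your proposal is correct and is essentially the argument the paper intends: the paper states that Proposition 1 follows ``directly'' from Definition 1, and your write-up simply makes that explicit --- $\mathbb{Z}_9$-linearity via the uniqueness of the idempotent decomposition $r=ua+(1-u)b$ composed with right multiplication by $A$, and distance preservation by reducing $d_H(\Phi(\mathbf{c}_1),\Phi(\mathbf{c}_2))$ to the blockwise weight identity $w_H(\Phi(\mathbf{c}))=\sum_i w_H(c_iA)=\sum_i w_L(c_i)=w_L(\mathbf{c})$. No gaps; the level of detail you supply is exactly what the paper leaves to the reader.
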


Let $\mathcal {C}$ be a $(n, M, d_L)$ linear code over $R$, where the symbols $n, M, d_L$ are the length, the number of the codewords and the minimum Lee distance of $\mathcal {C}$, respectively. Then, by Definition 1 and Proposition 1, we have that $\Phi(\mathcal {C})$ is a $(2n, M)$ linear code with minimum Hamming distance $d_L$ over $\mathbb{Z}_9$.

Let $\textbf{x}=(x_1, x_2, \ldots, x_n)$ and $\textbf{y}=(y_1, y_2, \ldots, y_n)$ be two vectors of $R^n$. The Euclidean inner product of $\textbf{x}$ and $\textbf{y}$ is defined as
\begin{equation*}
\textbf{x}\cdot \textbf{y}=\textbf{x}\textbf{y}^t=\sum_{i=1}^nx_iy_i.
\end{equation*}
The Euclidean dual code $\mathcal {C}^\perp$ of $\mathcal {C}$ is defined as $\mathcal {C}^\perp=\{\textbf{x}\in R^n| \textbf{x}\cdot \textbf{c}=0~{\rm for ~all~}\textbf{c}\in \mathcal {C}\}$. $\mathcal {C}$ is said to be Euclidean self-orthogonal if $\mathcal {C}\subseteq \mathcal {C}^\perp$ and Euclidean self-dual if $\mathcal {C}=\mathcal {C}^\perp$.
\begin{proposition}
Let $\mathcal {C}$ be a linear code of length $n$. Let $A$ be an element of ${\rm GL}_2(\mathbb{Z}_9)$ with $AA^t=\lambda I$, where $\lambda$ is a unit of $\mathbb{Z}_9$ and $I$ is the identity matrix of order $2$ over $\mathbb{Z}_9$. Then $\Phi(\mathcal {C})^\perp=\Phi(\mathcal {C}^\perp)$. Moreover, if $\mathcal {C}$ is Euclidean self-dual, so is $\Phi(\mathcal {C})$.
\end{proposition}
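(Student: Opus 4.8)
The plan is to transfer the problem to the coordinate description $r=ua+(1-u)b$ and then compare the inner product on $R^n$ with the Euclidean inner product on $\mathbb{Z}_9^{2n}$. First I would record the multiplication rule in $R$: from $u^2=u$, $(1-u)^2=1-u$ and $u(1-u)=0$, two elements $x=ua_1+(1-u)b_1$ and $y=ua_2+(1-u)b_2$ multiply as $xy=ua_1a_2+(1-u)b_1b_2$. Extending this componentwise, for $\textbf{x},\textbf{y}\in R^n$ with coordinates $x_i=ua_i+(1-u)b_i$ and $y_i=uc_i+(1-u)d_i$ one finds
$$\textbf{x}\cdot\textbf{y}=u\Big(\sum_{i} a_ic_i\Big)+(1-u)\Big(\sum_{i} b_id_i\Big).$$
By the uniqueness of the decomposition expression, $\textbf{x}\cdot\textbf{y}=0$ in $R$ holds if and only if $\sum_i a_ic_i\equiv 0$ and $\sum_i b_id_i\equiv 0$ in $\mathbb{Z}_9$ simultaneously.

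Next I would compute the inner product of the Gray images. The blocks of $\Phi(\textbf{x})$ and $\Phi(\textbf{y})$ are $(a_i,b_i)A$ and $(c_i,d_i)A$, so the hypothesis $AA^t=\lambda I$ gives $[(a_i,b_i)A]\cdot[(c_i,d_i)A]=(a_i,b_i)AA^t(c_i,d_i)^t=\lambda(a_ic_i+b_id_i)$ for each block, whence
$$\Phi(\textbf{x})\cdot\Phi(\textbf{y})=\lambda\Big(\sum_{i} a_ic_i+\sum_{i} b_id_i\Big).$$
One inclusion now follows at once: if $\textbf{y}\in\mathcal{C}^\perp$, then for every $\textbf{c}\in\mathcal{C}$ both coordinate sums vanish by the previous paragraph, so $\Phi(\textbf{y})\cdot\Phi(\textbf{c})=0$; hence $\Phi(\mathcal{C}^\perp)\subseteq\Phi(\mathcal{C})^\perp$.

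The main obstacle is the reverse inclusion. Since $\lambda$ is a unit, $\Phi(\textbf{x})\cdot\Phi(\textbf{y})=0$ forces only the single relation $\sum_i a_ic_i+\sum_i b_id_i\equiv 0$, which is strictly weaker than the two separate vanishing conditions characterizing orthogonality in $R$; thus the reverse inclusion cannot be read off from the inner products directly. I would instead close the gap by a counting argument. As $A$ is invertible and the decomposition expression is unique, $\Phi$ is a $\mathbb{Z}_9$-linear bijection of $R^n$ onto $\mathbb{Z}_9^{2n}$, so $|\Phi(\mathcal{C})|=|\mathcal{C}|$ and $|\Phi(\mathcal{C}^\perp)|=|\mathcal{C}^\perp|$. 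Using the idempotent splitting $\mathcal{C}=u\mathcal{C}_1\oplus(1-u)\mathcal{C}_2$ into codes $\mathcal{C}_1,\mathcal{C}_2$ over $\mathbb{Z}_9$ (equivalently, that $R$ and $\mathbb{Z}_9$ are Frobenius rings), one obtains $|\mathcal{C}|\,|\mathcal{C}^\perp|=|R|^n=9^{2n}$ and $|\Phi(\mathcal{C})|\,|\Phi(\mathcal{C})^\perp|=9^{2n}$. Combining these identities gives $|\Phi(\mathcal{C})^\perp|=9^{2n}/|\mathcal{C}|=|\mathcal{C}^\perp|=|\Phi(\mathcal{C}^\perp)|$, and together with the inclusion already proved this forces $\Phi(\mathcal{C}^\perp)=\Phi(\mathcal{C})^\perp$. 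The final assertion is then immediate: if $\mathcal{C}=\mathcal{C}^\perp$, then $\Phi(\mathcal{C})=\Phi(\mathcal{C}^\perp)=\Phi(\mathcal{C})^\perp$, so $\Phi(\mathcal{C})$ is Euclidean self-dual.
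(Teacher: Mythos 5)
Your proof is correct and follows essentially the same route as the paper: one inclusion $\Phi(\mathcal{C}^\perp)\subseteq\Phi(\mathcal{C})^\perp$ via the block inner-product computation using $AA^t=\lambda I$ together with the uniqueness of the decomposition $r=ua+(1-u)b$, then equality by comparing cardinalities. The only difference is that you explicitly justify the counting step $|\Phi(\mathcal{C}^\perp)|=|\Phi(\mathcal{C})^\perp|$ through the identities $|\mathcal{C}|\,|\mathcal{C}^\perp|=9^{2n}$ and $|\Phi(\mathcal{C})|\,|\Phi(\mathcal{C})^\perp|=9^{2n}$, whereas the paper asserts it without proof.
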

\begin{proof}
For all $\textbf{c}_1=(c_{1,0}, c_{1,1}, \ldots, c_{1,n-1})\in \mathcal {C}$ and $\textbf{c}_2=(c_{2,0}, c_{2,1}, \ldots, c_{2,n-1})\in \mathcal {C}^\perp$, where $c_{j,i}=ua_{j,i}+(1-u)b_{j,i}$, $a_{j,i}, b_{j,i}\in \mathbb{Z}_9$, $j=1,2$, $i=0,1,\ldots,n-1$, if $\textbf{c}_1\cdot \textbf{c}_2=0$, then we have $\textbf{c}_1\cdot \textbf{c}_2=\textbf{c}_1\textbf{c}_2^t=\sum_{i=0}^{n-1}c_{1,i}c_{2,i}=u\sum_{i=0}^{n-1}a_{1,i}a_{2,i}+(1-u)\sum_{i=0}^{n-1}b_{1,i}b_{2,i}=0$. Therefore, $\Phi(\textbf{c}_1)\cdot \Phi(\textbf{c}_2)=\Phi(\textbf{c}_1)\Phi(\textbf{c}_2)^t=(c_{1,0}A, c_{1,1}A, \ldots, c_{1,n-1}A)(c_{2,0}A, c_{2,1}A, \ldots, c_{2,n-1}A)^t=\sum_{i=0}^{n-1}(c_{1,i}A)(c_{2,i}A)^t=\lambda \sum_{i=0}^{n-1}c_{1,i}c_{2,i}^t=\lambda(u\sum_{i=0}^{n-1}a_{1,i}a_{2,i}+(1-u)\sum_{i=0}^{n-1}b_{1,i}b_{2,i})=0$. Thus $\Phi(\mathcal {C}^\perp)\subseteq \Phi(\mathcal {C})^\perp$. Since $|\Phi(\mathcal {C}^\perp)|=|\Phi(\mathcal {C})^\perp|$, it follows that $\Phi(\mathcal {C})^\perp=\Phi(\mathcal {C}^\perp)$. Clearly, $\Phi(\mathcal {C})$ is Euclidean self-orthogonal if $\mathcal {C}$ is Euclidean self-dual. Further, $|\Phi(\mathcal {C})|=|\mathcal {C}|=81^{n/2}=9^{2n/2}$. Thus, $\Phi(\mathcal {C})$ is Euclidean self-dual.
\end{proof}

Define $$\mathcal {C}_1=\{\textbf{x}\in \mathbb{Z}_9^n| \exists \textbf{y}\in \mathbb{Z}_9^n, u\textbf{x}+(1-u)\textbf{y}\in \mathcal {C}\}$$ and $$\mathcal {C}_2=\{\textbf{y}\in \mathbb{Z}_9^n| \exists \textbf{x}\in \mathbb{Z}_9^n, u\textbf{x}+(1-u)\textbf{y}\in \mathcal {C}\}.$$ Then $\mathcal {C}_1$ and $\mathcal {C}_2$ are both $\mathbb{Z}_9$-linear of length $n$. Moreover, the linear code $\mathcal {C}$ of length $n$ over $R$ can be expressed uniquely as $$\mathcal {C}=u\mathcal {C}_1\oplus (1-u)\mathcal {C}_2.$$
\begin{proposition}
Let $\mathcal {C}$ be a linear code of length $n$ over $R$. Then $\mathcal {C}^\perp=u\mathcal {C}_1^\perp\oplus(1-u)\mathcal {C}_2^\perp$. Moreover, $\mathcal {C}$ is Euclidean self-dual if and only if $\mathcal {C}_1$ and $\mathcal {C}_2$ are both Euclidean self-dual over $\mathbb{Z}_9$.
\end{proposition}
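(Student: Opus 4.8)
The plan is to prove the dual decomposition $\mathcal{C}^\perp = u\mathcal{C}_1^\perp \oplus (1-u)\mathcal{C}_2^\perp$ by a double inclusion, and then to read off the self-duality criterion from the uniqueness of the decomposition $r = ua + (1-u)b$. Throughout I would exploit the orthogonal-idempotent identities $u^2 = u$, $(1-u)^2 = 1-u$ and $u(1-u) = 0$, which make every cross term in an inner product collapse.

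For the inclusion $u\mathcal{C}_1^\perp \oplus (1-u)\mathcal{C}_2^\perp \subseteq \mathcal{C}^\perp$, I would take an arbitrary $\mathbf{z} = u\mathbf{x} + (1-u)\mathbf{y}$ with $\mathbf{x} \in \mathcal{C}_1^\perp$, $\mathbf{y} \in \mathcal{C}_2^\perp$, and an arbitrary codeword $\mathbf{c} = u\mathbf{a} + (1-u)\mathbf{b} \in \mathcal{C}$ with $\mathbf{a} \in \mathcal{C}_1$, $\mathbf{b} \in \mathcal{C}_2$. Computing the Euclidean inner product componentwise and discarding the $u(1-u)$ terms gives $\mathbf{z}\cdot\mathbf{c} = u(\mathbf{x}\cdot\mathbf{a}) + (1-u)(\mathbf{y}\cdot\mathbf{b})$, which vanishes because $\mathbf{x}\cdot\mathbf{a} = 0$ and $\mathbf{y}\cdot\mathbf{b} = 0$ by definition of the $\mathbb{Z}_9$-duals. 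Hence $\mathbf{z} \in \mathcal{C}^\perp$.

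For the reverse inclusion I would use that $\mathcal{C}$ is an $R$-module: for any $\mathbf{a} \in \mathcal{C}_1$ there is some $\mathbf{c} = u\mathbf{a} + (1-u)\mathbf{b} \in \mathcal{C}$, whence $u\mathbf{c} = u\mathbf{a} \in \mathcal{C}$; similarly $(1-u)\mathbf{b} \in \mathcal{C}$ for every $\mathbf{b} \in \mathcal{C}_2$. Writing an arbitrary $\mathbf{z} = u\mathbf{x} + (1-u)\mathbf{y} \in \mathcal{C}^\perp$ and pairing it against $u\mathbf{a}$ yields $u(\mathbf{x}\cdot\mathbf{a}) = 0$. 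The key step is to deduce $\mathbf{x}\cdot\mathbf{a} = 0$ from this: for a scalar $s \in \mathbb{Z}_9$ the decomposition of $us$ is $us = u\cdot s + (1-u)\cdot 0$, so uniqueness of the decomposition forces $s = 0$ whenever $us = 0$. This gives $\mathbf{x} \in \mathcal{C}_1^\perp$, and pairing against $(1-u)\mathbf{b}$ gives $\mathbf{y} \in \mathcal{C}_2^\perp$, completing the inclusion $\mathcal{C}^\perp \subseteq u\mathcal{C}_1^\perp \oplus (1-u)\mathcal{C}_2^\perp$. Alternatively, one could close this direction by a cardinality count, using $|\mathcal{C}| = |\mathcal{C}_1|\,|\mathcal{C}_2|$ together with $|D|\,|D^\perp| = 9^n$ for $\mathbb{Z}_9$-codes and the corresponding identity $|\mathcal{C}|\,|\mathcal{C}^\perp| = 81^n$ over $R$.

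Finally, the self-duality statement follows from uniqueness of the decomposition. Since $\mathcal{C}^\perp = u\mathcal{C}_1^\perp \oplus (1-u)\mathcal{C}_2^\perp$ and the two $\mathbb{Z}_9$-components of a code over $R$ are uniquely determined by the code, the equality $\mathcal{C} = \mathcal{C}^\perp$ holds if and only if $\mathcal{C}_1 = \mathcal{C}_1^\perp$ and $\mathcal{C}_2 = \mathcal{C}_2^\perp$ simultaneously. I expect the only delicate point to be the scalar-cancellation $us = 0 \Rightarrow s = 0$ used in the reverse inclusion; everything else is bookkeeping with the idempotent identities and the uniqueness of the decomposition.
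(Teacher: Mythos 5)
Your proposal is correct and takes essentially the same route as the paper: both arguments rest on the orthogonal idempotents $u$ and $1-u$, the $R$-module structure (multiplying codewords or dual vectors by $u$ to isolate components), and the uniqueness of the expression $ua+(1-u)b$, organized as a double inclusion for $\mathcal{C}^\perp=u\mathcal{C}_1^\perp\oplus(1-u)\mathcal{C}_2^\perp$. The one place you diverge is the ``only if'' half of the self-duality claim, where your appeal to the uniqueness of the component codes is cleaner than the paper's contradiction argument (which posits $\mathbf{a}\in\mathcal{C}_1^\perp\setminus\mathcal{C}_1$ and $\mathbf{b}\in\mathcal{C}_2$ with $(u\mathbf{a}+(1-u)\mathbf{b})^2\neq 0$ without justifying that such a pair exists).
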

\begin{proof}
Define
\begin{equation*}
\widehat{\mathcal {C}}_1=\{ \textbf{x} \in \mathbb{Z}_9^n|~\exists \textbf{y} \in \mathbb{Z}_9^n, u\textbf{x}+(1-u)\textbf{y}\in \mathcal {C}^\perp\}
\end{equation*}
and
\begin{equation*}
\widehat{\mathcal {C}}_2=\{ \textbf{y} \in \mathbb{Z}_9^n|~\exists \textbf{x} \in \mathbb{Z}_9^n, u\textbf{x}+(1-u)\textbf{y}\in \mathcal {C}^\perp\}.
\end{equation*}
Then $\mathcal {C}^\perp=u\widehat{\mathcal {C}}_1+(1-u)\widehat{\mathcal {C}}_2$ and this expression is unique. Clearly, $\widehat{\mathcal {C}}_1\subseteq \mathcal {C}_1^\perp$. Let $\textbf{c}_1$ be an element of $\mathcal {C}_1^\perp$. Then, for any $\textbf{x}\in \mathcal {C}_1$, there exists $\textbf{y}\in \mathbb{Z}_9^n$ such that $\textbf{c}_1\cdot(u\textbf{x}+(1-u)\textbf{y})=\textbf{0}$. Let $\textbf{c}=u\textbf{x}+(1-u)\textbf{y}\in \mathcal {C}$. Then $u\textbf{c}_1\cdot \textbf{c}=\textbf{0}$, which implies that $u\textbf{c}_1\in \mathcal {C}^\perp$. By the unique expression of $\mathcal {C}^\perp$, we have $\textbf{c}_1\in \widehat{\mathcal {C}}_1$, i.e. $\mathcal {C}_1=\widehat{\mathcal {C}}_1$. Similarly, we can prove $\mathcal {C}_2=\widehat{\mathcal {C}}_2$ implying $\mathcal {C}^\perp=u\mathcal {C}_1^\perp+(1-u)\mathcal {C}_2^\perp$.
\par
Clearly, $\mathcal {C}$ is Euclidean self-dual over $R$ if $\mathcal {C}_1$ and $\mathcal {C}_2$ are both Euclidean self-dual over $\mathbb{Z}_9$. If $\mathcal {C}$ is Euclidean self-dual, then $\mathcal {C}_1$ and $\mathcal {C}_2$ are both Euclidean self-orthogonal over $\mathbb{Z}_9$, i.e. $\mathcal {C}_1\subseteq \mathcal {C}_1^\perp$ and $\mathcal {C}_2\subseteq \mathcal {C}_2^\perp$. Next, we will prove $\mathcal {C}_1=\mathcal {C}_1^\perp$ and  $\mathcal {C}_2=\mathcal {C}_2^\perp$. If not, then there are elements $\textbf{a}\in \mathcal {C}_1^\perp \setminus \mathcal {C}_1$ and $\textbf{b}\in \mathcal {C}_2$ such that $(u\textbf{a}+(1-u)\textbf{b})^2\neq \textbf{0}$, which is a contradiction that $\mathcal {C}$ is Euclidean self-dual. Therefore, $\mathcal {C}_1=\mathcal {C}_1^\perp$ and $\mathcal {C}_2=\mathcal {C}_2^\perp$.
\end{proof}
\section{Quadratic residue codes over $R$}
Let $T$ be the cyclic shift operator on $R^n$, i.e. for any vector $\textbf{c}=(c_0, c_1, \ldots, c_{n-1})$ of $ R^n$, $T(\textbf{c})=(c_{n-1}, c_0, \ldots, c_{n-2})$.
\par
A linear code $\mathcal {C}$ of length $n$ over $R$ is called cyclic if and only if $T(\mathcal {C})=\mathcal {C}$.
Define the polynomial ring $R_n=R[X]/(X^n-1)=\{c_0+c_1X+\cdots+c_{n-1}X^{n-1}+(X^n-1)|~c_0, c_1, \ldots, c_{n-1}\in R\}$. For any polynomial $c(X)+(X^n-1)\in R_n$, we denote it as $c(X)$ for simplicity.
\par
Define a map as follows
\begin{equation*}
\begin{split}
\varphi:~~R^n&\rightarrow R_n=R[X]/(X^n-1)\\
 (c_0, c_1, \ldots, c_{n-1})&\mapsto c(X)=c_0+c_1X+\cdots +c_{n-1}X^{n-1}.
 \end{split}
 \end{equation*}
Clearly, $\varphi$ is an $R$-module isomorphism from $R^n$ to $R_n$. And a linear code $\mathcal {C}$ of length $n$ is cyclic over $R$ if and only if $\varphi(\mathcal {C})$ is an ideal of $R_n$. Sometimes, we identify the cyclic code $\mathcal {C}$ with the ideal of $R_n$.
\begin{lemma}
A linear code $\mathcal {C}=u\mathcal {C}_1\oplus (1-u)\mathcal {C}_2$ is cyclic over $R$ if and only if $\mathcal {C}_1$ and $\mathcal {C}_2$ are both cyclic over $\mathbb{Z}_9$.
\end{lemma}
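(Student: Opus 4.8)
The plan is to exploit that the cyclic shift $T$ is $R$-linear and therefore commutes with multiplication by the orthogonal idempotents $u$ and $1-u$. First I would record the elementary identity that for any $\textbf{x}, \textbf{y}\in \mathbb{Z}_9^n$,
$$T(u\textbf{x}+(1-u)\textbf{y})=uT(\textbf{x})+(1-u)T(\textbf{y}),$$
which holds simply because $u$ and $1-u$ are constants while $T$ only permutes coordinates. Summing over all codewords yields the clean description
$$T(\mathcal {C})=uT(\mathcal {C}_1)\oplus(1-u)T(\mathcal {C}_2),$$
where $T(\mathcal {C}_1)$ and $T(\mathcal {C}_2)$ are again $\mathbb{Z}_9$-linear codes of length $n$, since a cyclic shift preserves $\mathbb{Z}_9$-linearity.

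The forward implication is then immediate: if $\mathcal {C}_1$ and $\mathcal {C}_2$ are cyclic over $\mathbb{Z}_9$, then $T(\mathcal {C}_1)=\mathcal {C}_1$ and $T(\mathcal {C}_2)=\mathcal {C}_2$, so the displayed identity gives $T(\mathcal {C})=u\mathcal {C}_1\oplus(1-u)\mathcal {C}_2=\mathcal {C}$, i.e. $\mathcal {C}$ is cyclic over $R$.

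For the converse, the key ingredient I would establish is the uniqueness of a decomposition $\mathcal {C}=uD_1\oplus(1-u)D_2$ among $\mathbb{Z}_9$-linear codes $D_1,D_2$. Concretely, multiplying a codeword $u\textbf{x}+(1-u)\textbf{y}$ by $u$ and using $u^2=u$ and $u(1-u)=0$ recovers $u\textbf{x}$, and the map $\textbf{x}\mapsto u\textbf{x}$ is injective on $\mathbb{Z}_9^n$; hence $D_1$ is forced to coincide with the projection code $\mathcal {C}_1$ defined in the text, and symmetrically $D_2=\mathcal {C}_2$. Granting this, suppose $\mathcal {C}$ is cyclic, so $T(\mathcal {C})=\mathcal {C}$. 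Comparing the two decompositions $\mathcal {C}=u\mathcal {C}_1\oplus(1-u)\mathcal {C}_2$ and $T(\mathcal {C})=uT(\mathcal {C}_1)\oplus(1-u)T(\mathcal {C}_2)$ of the one code then forces $T(\mathcal {C}_1)=\mathcal {C}_1$ and $T(\mathcal {C}_2)=\mathcal {C}_2$, which is exactly cyclicity of $\mathcal {C}_1$ and $\mathcal {C}_2$.

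The only genuine obstacle is this uniqueness/recovery step in the converse; once the idempotent projections $\textbf{c}\mapsto u\textbf{c}$ and $\textbf{c}\mapsto(1-u)\textbf{c}$ are seen to pin down $\mathcal {C}_1$ and $\mathcal {C}_2$ and to be $T$-equivariant, everything else is formal. An alternative that sidesteps stating uniqueness separately is to argue directly that $u\mathcal {C}=u\mathcal {C}_1$ is $T$-invariant; since $\textbf{x}\mapsto u\textbf{x}$ is an injective, $T$-equivariant $\mathbb{Z}_9$-map, the invariance transfers back to $\mathcal {C}_1$, giving $T(\mathcal {C}_1)\subseteq\mathcal {C}_1$, which finiteness (or $T^n=\mathrm{id}$) upgrades to equality, and likewise for $\mathcal {C}_2$.
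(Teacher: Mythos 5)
Your proposal is correct and takes essentially the same route as the paper: both proofs rest on the identity $T(u\textbf{x}+(1-u)\textbf{y})=uT(\textbf{x})+(1-u)T(\textbf{y})$, which the paper applies codeword-by-codeword and you apply set-wise to get $T(\mathcal{C})=uT(\mathcal{C}_1)\oplus(1-u)T(\mathcal{C}_2)$. The only packaging difference is in the converse: you establish uniqueness of the decomposition via injectivity of $\textbf{x}\mapsto u\textbf{x}$, whereas the paper gets the same conclusion for free from its definition of $\mathcal{C}_1$ and $\mathcal{C}_2$ as projection codes ($\mathcal{C}_1=\{\textbf{x}\mid\exists\,\textbf{y},\ u\textbf{x}+(1-u)\textbf{y}\in\mathcal{C}\}$), under which $T(\textbf{a})\in\mathcal{C}_1$ is immediate once $uT(\textbf{a})+(1-u)T(\textbf{b})\in\mathcal{C}$.
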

\begin{proof}
Let $(a_0, a_1, \ldots, a_{n-1}) \in \mathcal {C}_1$ and $(b_0, b_1, \ldots, b_{n-1})\in \mathcal {C}_2$. Assume that $c_i=ua_i+(1-u)b_i$ for $i=0,1,\ldots, n-1$. Then the vector $(c_0, c_1, \ldots, c_{n-1})$ belongs to $\mathcal {C}$. Since $\mathcal {C}$ is a cyclic code, it follows that $(c_{n-1}, c_0, \ldots, c_{n-2})\in \mathcal {C}$. Note that $(c_{n-1}, c_0, \ldots, c_{n-2})=u(a_{n-1}, a_0, \ldots, a_{n-2})+(1-u)(b_{n-1}, b_0, \ldots, b_{n-2})$. Hence $(a_{n-1}, a_0, \ldots, a_{n-2})\in \mathcal {C}_1$ and $(b_{n-1}, b_0, \ldots, b_{n-2})\in \mathcal {C}_2$, which implies that $\mathcal {C}_1$ and $\mathcal {C}_2$ are both cyclic codes over $\mathbb{Z}_9$.
\par
Conversely, let $\mathcal {C}_1$ and $\mathcal {C}_2$ be both cyclic codes over $\mathbb{Z}_9$. Let $(c_0, c_1, \ldots, c_{n-1})\in \mathcal {C}$, where $c_i=ua_i+(1-u)b_i$ for $i=0,1,\ldots,n-1$. Then $(a_0, a_1, \ldots, a_{n-1})\in\mathcal {C}_1$ and $(b_0, b_1, \ldots, b_{n-1})\in\mathcal {C}_2$. Note that $(c_{n-1}, c_0, \ldots, c_{n-2})=u(a_{n-1}, a_0, \ldots, a_{n-2})+(1-u)(b_{n-1}, b_0, \ldots, b_{n-2})\in u\mathcal {C}_1\oplus(1-u)\mathcal {C}_2=\mathcal {C}$. Therefore, $\mathcal {C}$ is a cyclic code over $R$.
\end{proof}

In the following of this section, we assume that $n$ is a positive integer such that ${\rm gcd}(n, 9)=1$. Let $C$ be a cyclic code of length $n$ over $\mathbb{Z}_9$. Then, similar to the cyclic codes over $\mathbb{Z}_4$ in \cite{Wan}, there exist unique monic polynomials $f(X), g(X), h(X)$ such that $X^n-1=f(X)g(X)h(X)$ and $C=(f(X)g(X),3f(X)h(X))$.
\begin{lemma}
Let $\mathcal {C}=u\mathcal {C}_1\oplus (1-u)\mathcal {C}_2$ be a cyclic code of length $n$ over $R$. Then $\mathcal {C}=(uf_1(X)g_1(X)+(1-u)f_2(X)g_2(X), 3uf_1(X)h_1(X)+3(1-u)f_2(X)h_2(X))$, where $f_1(X)g_1(X)h_1(X)=f_2(X)g_2(X)h_2(X)=X^n-1$ and $\mathcal {C}_1=(f_1(X)g_1(X),3f_1(X)h_1(X))$, $\mathcal {C}_2=(f_2(X)g_2(X),3f_2(X)h_2(X))$ over $\mathbb{Z}_9$, respectively.
\end{lemma}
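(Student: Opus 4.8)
The plan is to reduce the statement to the already-established structure of cyclic codes over $\mathbb{Z}_9$ together with the orthogonal-idempotent decomposition $R = u\mathbb{Z}_9 \oplus (1-u)\mathbb{Z}_9$. By Lemma 1 the components $\mathcal{C}_1$ and $\mathcal{C}_2$ are cyclic over $\mathbb{Z}_9$, so by the structure theorem recalled just before the lemma there exist factorizations $X^n-1 = f_1(X)g_1(X)h_1(X) = f_2(X)g_2(X)h_2(X)$ with $\mathcal{C}_1 = (f_1(X)g_1(X),\, 3f_1(X)h_1(X))$ and $\mathcal{C}_2 = (f_2(X)g_2(X),\, 3f_2(X)h_2(X))$ as ideals of $\mathbb{Z}_9[X]/(X^n-1)$. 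Identifying cyclic codes with ideals of $R_n$ via $\varphi$, it then suffices to show that the ideal $I$ of $R_n$ generated by $p := uf_1g_1 + (1-u)f_2g_2$ and $q := 3uf_1h_1 + 3(1-u)f_2h_2$ coincides with $u\mathcal{C}_1 \oplus (1-u)\mathcal{C}_2$.

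I would prove this equality by double inclusion, exploiting $u^2 = u$ and $u(1-u)=0$. For $I \subseteq u\mathcal{C}_1 \oplus (1-u)\mathcal{C}_2$, observe that both generators $p$ and $q$ already lie in $u\mathcal{C}_1 \oplus (1-u)\mathcal{C}_2$, and that this direct sum is an ideal of $R_n$: writing an arbitrary $r \in R_n$ as $r = ur_1 + (1-u)r_2$ with $r_1,r_2 \in \mathbb{Z}_9[X]/(X^n-1)$, one gets $r(uc_1 + (1-u)c_2) = ur_1c_1 + (1-u)r_2c_2$, which stays in the direct sum because $\mathcal{C}_1,\mathcal{C}_2$ are ideals. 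For the reverse inclusion I would multiply the generators by the idempotents to separate the two coordinates, namely $up = uf_1g_1$, $(1-u)p = (1-u)f_2g_2$, $uq = 3uf_1h_1$, and $(1-u)q = 3(1-u)f_2h_2$, so that all four separated polynomials lie in $I$. Since the ideal of $R_n$ generated by $uf_1g_1$ and $3uf_1h_1$ is exactly $u\mathcal{C}_1$ (any $r\cdot uf_1g_1 = ur_1f_1g_1$ kills the $(1-u)$-part), and likewise $(1-u)f_2g_2, 3(1-u)f_2h_2$ generate $(1-u)\mathcal{C}_2$, we conclude $u\mathcal{C}_1 \oplus (1-u)\mathcal{C}_2 \subseteq I$.

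There is no single hard obstacle here; the delicate points are purely bookkeeping. The first is confirming that $u\mathcal{C}_1 \oplus (1-u)\mathcal{C}_2$ is genuinely closed under multiplication by arbitrary elements of $R_n$, which is handled by the idempotent expansion $r = ur_1 + (1-u)r_2$ above. The second is checking that multiplying a glued generator such as $uf_1g_1 + (1-u)f_2g_2$ by $u$ truly annihilates the $(1-u)$-component while fixing the $u$-component, which is immediate from $u^2=u$ and $u(1-u)=0$. Everything else is a routine consequence of the Chinese-remainder-type splitting $R_n \cong uR_n \oplus (1-u)R_n$, which converts generating sets of the two $\mathbb{Z}_9$-ideals $\mathcal{C}_1,\mathcal{C}_2$ into a generating set of the $R$-ideal $\mathcal{C}$ by gluing corresponding generators with the idempotents $u$ and $1-u$.
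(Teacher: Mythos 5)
Your proposal is correct and follows essentially the same route as the paper's proof: the inclusion $I\subseteq u\mathcal{C}_1\oplus(1-u)\mathcal{C}_2$ because the glued generators lie in the ideal $\mathcal{C}$, and the reverse inclusion by multiplying the generators by the idempotents $u$ and $1-u$ (the paper phrases this as $u\mathcal{C}_1=u\widetilde{\mathcal{C}}$ and $(1-u)\mathcal{C}_2=(1-u)\widetilde{\mathcal{C}}$, ``since $u^2=u$''). Your write-up merely makes explicit the bookkeeping the paper leaves implicit, such as verifying that $u\mathcal{C}_1\oplus(1-u)\mathcal{C}_2$ is itself an ideal of $R_n$.
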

\begin{proof}
Let $\widetilde{\mathcal {C}}=(uf_1(X)g_1(X)+(1-u)f_2(X)g_2(X), 3uf_1(X)h_1(X)+3(1-u)f_2(X)h_2(X))$, $\mathcal {C}_1=(f_1(X)g_1(X),3f_1(X)h_1(X))$ and $\mathcal {C}_2=(f_2(X)g_2(X),3f_2(X)h_2(X))$. Clearly, $\widetilde{\mathcal {C}}\subseteq \mathcal {C}$. For $u\mathcal {C}_1$, we have $u\mathcal {C}_1=u\widetilde{\mathcal {C}}$ since $u^2=u$ over $\mathbb{Z}_9$. Similarly, $(1-u)\mathcal {C}_2=(1-u)\widetilde{\mathcal {C}}$. Therefore $u\mathcal {C}_1\oplus (1-u)\mathcal {C}_2\subseteq \widetilde{\mathcal {C}}$. Thus $\mathcal {C}=\widetilde{\mathcal {C}}$.
\end{proof}
\begin{proposition}
The quotient polynomial ring $R[X]/(X^n-1)$ is principal.
\end{proposition}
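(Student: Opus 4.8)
The plan is to reduce the statement to a known fact about $\mathbb{Z}_9$ by splitting $R$ through its idempotents. Since $u^2=u$, $(1-u)^2=1-u$ and $u(1-u)=0$, the assignment $r=ua+(1-u)b\mapsto(a,b)$ is a ring isomorphism $R\cong\mathbb{Z}_9\times\mathbb{Z}_9$. Applying it coefficientwise commutes with multiplication and with reduction modulo $X^n-1$, so it extends to a ring isomorphism
\[
R[X]/(X^n-1)\;\cong\;\bigl(\mathbb{Z}_9[X]/(X^n-1)\bigr)\times\bigl(\mathbb{Z}_9[X]/(X^n-1)\bigr).
\]
It therefore suffices to prove that $\mathbb{Z}_9[X]/(X^n-1)$ is principal, together with the elementary fact that a finite product of principal rings is principal.

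For the product fact, I would note that every ideal of $S_1\times S_2$ has the form $I_1\times I_2$ with $I_j$ an ideal of $S_j$; if $I_j=(g_j)$, then $I_1\times I_2=\bigl((g_1,g_2)\bigr)$, so a single element generates it. Hence once each factor is principal, so is the product, and the conclusion for $R[X]/(X^n-1)$ follows at once.

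The substance of the argument is thus to show that $\mathbb{Z}_9[X]/(X^n-1)$ is principal; note this is not immediate from the two-generator description $(fg,3fh)$ recorded before Lemma 2, and the point is exactly that such a pair collapses to a single generator. Here I would use that $\mathbb{Z}_9$ is a finite chain ring with maximal ideal $(3)$ and residue field $\mathbb{F}_3$, and that $\gcd(n,9)=1$ forces $X^n-1$ to be separable modulo $3$. Hensel's lemma then lifts the factorization of $X^n-1$ into distinct monic basic irreducibles over $\mathbb{F}_3$ to a factorization $X^n-1=\pi_1\cdots\pi_r$ into pairwise coprime monic polynomials over $\mathbb{Z}_9$, and the Chinese Remainder Theorem gives
\[
\mathbb{Z}_9[X]/(X^n-1)\;\cong\;\prod_{j=1}^{r}\mathbb{Z}_9[X]/(\pi_j).
\]
Each factor $\mathbb{Z}_9[X]/(\pi_j)$ is a Galois ring over $\mathbb{Z}_9$, hence a finite chain ring whose only ideals are $0\subset(3)\subset\mathbb{Z}_9[X]/(\pi_j)$; in particular each factor is principal, and by the product fact above so is $\mathbb{Z}_9[X]/(X^n-1)$, and therefore so is $R[X]/(X^n-1)$.

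The step I expect to demand the most care is the passage from the factorization of $X^n-1$ over $\mathbb{F}_3$ to pairwise coprime monic lifts over $\mathbb{Z}_9$ and the verification that the resulting CRT factors are genuinely finite chain rings. The coprimality hypothesis $\gcd(n,9)=1$, equivalently separability of $X^n-1$ over $\mathbb{F}_3$, is precisely what makes Hensel lifting and the chain-ring conclusion available; without it the decomposition, and with it the single-generator reduction, can break down.
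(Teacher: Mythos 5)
Your proof is correct, and it shares the paper's overall skeleton---split $R$ along the idempotents $u$ and $1-u$ and reduce to principality of $\mathbb{Z}_9[X]/(X^n-1)$---but both halves are executed differently. For the transfer step, the paper works at the level of codes: Lemma 2 gives every cyclic code over $R$ the two-generator form $(uf_1g_1+(1-u)f_2g_2,\ 3uf_1h_1+3(1-u)f_2h_2)$, and principality over $R$ is then read off by combining single generators of the component codes; you instead use the ring isomorphism $R[X]/(X^n-1)\cong(\mathbb{Z}_9[X]/(X^n-1))\times(\mathbb{Z}_9[X]/(X^n-1))$ together with the ideal structure of product rings. This is equivalent but cleaner, and it actually fills in a step the paper leaves terse: Lemma 2 by itself only provides a two-generator description, and one still needs the observation that if $\mathcal{C}_1=(g_1)$ and $\mathcal{C}_2=(g_2)$ then $ug_1+(1-u)g_2$ generates $u\mathcal{C}_1\oplus(1-u)\mathcal{C}_2$---exactly what your product-ideal argument supplies. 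For the core step over $\mathbb{Z}_9$, the paper simply cites the structure theorem for cyclic codes over $\mathbb{Z}_9$ (the analogue of Wan's $\mathbb{Z}_4$ result), namely $C=(fg,3fh)$ with $X^n-1=fgh$ and the collapse $C=(fg+3f)$; you prove principality from scratch via separability of $X^n-1$ modulo $3$ (the only place $\gcd(n,9)=1$ enters), Hensel lifting to a pairwise coprime factorization over $\mathbb{Z}_9$, the Chinese Remainder Theorem, and the chain-ring property of the Galois-ring factors $\mathbb{Z}_9[X]/(\pi_j)$. What each buys: the paper's route is shorter and yields explicit generators that are reused in the later lemmas, while yours is self-contained, does not rest on the uncited collapse $(fg,3fh)=(fg+3f)$, and makes transparent why the section's standing assumption $\gcd(n,9)=1$ is indispensable. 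One terminological nit: over $\mathbb{F}_3$ the factors of $X^n-1$ are just distinct monic irreducibles; ``basic irreducible'' is the appropriate term only for their lifts to $\mathbb{Z}_9$.
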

\begin{proof}
Let $C=(f(X)g(X), 3f(X)h(X))$ be a cyclic code of length $n$ over $\mathbb{Z}_9$, where $X^n-1=f(X)g(X)h(X)$. Then, similar to the cyclic codes over $\mathbb{Z}_4$ in \cite{Wan}, $C=(f(X)g(X)+3f(X))$. By Lemma 2, we have  any cyclic code $\mathcal {C}$ is principal over $R$, which implies the result.
\end{proof}

Furthermore, the number of distinct cyclic codes of length $n$ over $R$ is $9^r$, where $r$ is the number of the basic irreducible factors of $X^n-1$ over $\mathbb{Z}_9$.

\begin{lemma}
Let $\mathcal {C}=(uf_1(X)g_1(X)+(1-u)f_2(X)g_2(X), 3uf_1(X)h_1(X)+3(1-u)f_2(X)h_2(X))$, where $f_1(X)g_1(X)h_1(X)=f_2(X)g_2(X)h_2(X)=X^n-1$ and $\mathcal {C}_1=(f_1(X)g_1(X), 3f_1(X)h_1(X))$, $\mathcal {C}_2=(f_2(X)g_2(X),3f_2(X)h_2(X))$ over $\mathbb{Z}_9$, respectively. Then $\mathcal {C}$ is Euclidean self-dual if and only if $f_1(X)=h_1^*(X), g_1(X)=g_1^*(X)$ and $f_2(X)=h_2^*(X), g_2(X)=g_2^*(X)$, where $f^*(X)=\pm X^{{\rm deg}f(X)}f(X^{-1})$.
\end{lemma}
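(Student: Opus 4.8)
The plan is to reduce the statement to the two constituent $\mathbb{Z}_9$-codes and then settle a single clean duality fact over $\mathbb{Z}_9$. By Proposition 3 together with Lemma 2, the code $\mathcal{C}$ is Euclidean self-dual if and only if both $\mathcal{C}_1=(f_1g_1,3f_1h_1)$ and $\mathcal{C}_2=(f_2g_2,3f_2h_2)$ are Euclidean self-dual over $\mathbb{Z}_9$. Since the two constituents have identical form, it suffices to prove one statement: a cyclic code $C=(fg,3fh)$ of length $n$ over $\mathbb{Z}_9$, with $fgh=X^n-1$ and $\gcd(n,9)=1$, is Euclidean self-dual if and only if $f=h^*$ and $g=g^*$. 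Applying this separately to the index-$1$ and index-$2$ data then produces exactly the four asserted conditions.

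The core of the argument is the dual formula for cyclic codes over $\mathbb{Z}_9$, and I would first pass to the decomposition of the ambient ring. Because $\gcd(n,9)=1$, Hensel lifting factors $X^n-1=m_1\cdots m_r$ into pairwise coprime monic basic irreducibles over $\mathbb{Z}_9$, and the Chinese Remainder Theorem gives $\mathbb{Z}_9[X]/(X^n-1)\cong\prod_i \mathbb{Z}_9[X]/(m_i)$, a product of Galois rings, each of which is a finite chain ring with maximal ideal $(3)$ and ideal lattice $0\subsetneq(3)\subsetneq(1)$. Under this isomorphism a cyclic code amounts to a choice of one of these three ideals in each coordinate, and from $C=(fg,3fh)$ one reads off that the coordinate ideal is $0$ exactly on the factors dividing $f$, is $(3)$ exactly on the factors dividing $g$, and is the whole component exactly on the factors dividing $h$.

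Next I would compute $C^\perp$ via the standard correspondence that the Euclidean dual of a cyclic code is the reciprocal of its annihilator; this turns the computation componentwise. In each chain ring the annihilators are $\mathrm{Ann}(0)=(1)$, $\mathrm{Ann}((3))=(3)$, $\mathrm{Ann}((1))=0$, while the reciprocal map $m_i\mapsto m_i^*$ permutes the coordinates. Tracking the three ideal types through this annihilate-then-reciprocate procedure yields $C^\perp=(g^*h^*,3f^*h^*)$, the $\mathbb{Z}_9$ analogue of the $\mathbb{Z}_4$ dual formula used in \cite{Wan}, and the cardinality check $|C|\,|C^\perp|=9^n$ confirms it. Writing this dual in standard form $(FG,3FH)$ with $FGH=X^n-1$ identifies $F=h^*$, $G=g^*$, $H=f^*$ (indeed $h^*g^*f^*=(fgh)^*=X^n-1$ after monic normalization). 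Since the standard-form triple of a cyclic code is unique, as recalled before Lemma 2, the equality $C=C^\perp$ is equivalent to $(f,g,h)=(h^*,g^*,f^*)$, i.e. $f=h^*$, $g=g^*$ and $h=f^*$; the last identity is automatic from $f=h^*$ because $*$ is an involution on monic polynomials. This is precisely the asserted condition.

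The step I expect to be the main obstacle is establishing the dual formula $C^\perp=(g^*h^*,3f^*h^*)$ rigorously rather than merely by analogy with \cite{Wan}. The annihilator-and-reciprocal description of $C^\perp$ is valid over any finite commutative ring, but converting it into the clean generator form above relies on the chain-ring structure of the Galois components (so that annihilators remain principal and the three-ideal lattice behaves as stated), together with care about the monic normalization and the $\pm$ sign built into $f^*(X)=\pm X^{\deg f(X)}f(X^{-1})$. Once that formula is secured, the identification of the standard-form triple and the final logical equivalence are routine.
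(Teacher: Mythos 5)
Your proposal is correct and follows essentially the same route as the paper: both proofs first use Proposition 3 (together with Lemma 2) to reduce self-duality of $\mathcal{C}$ to self-duality of the constituents $\mathcal{C}_1$ and $\mathcal{C}_2$ over $\mathbb{Z}_9$, and then apply the standard-form characterization of self-dual cyclic codes over $\mathbb{Z}_9$. The only divergence is in how that second step is handled: the paper disposes of it with the single line ``similar to Theorem 12.5.10 in \cite{Huffuman}'' (a $\mathbb{Z}_4$ statement, cf.\ \cite{Wan}), whereas you actually prove the $\mathbb{Z}_9$ version --- CRT decomposition of $\mathbb{Z}_9[X]/(X^n-1)$ into Galois-ring (chain-ring) components with ideal lattice $0\subsetneq(3)\subsetneq(1)$, the annihilate-then-reciprocate computation yielding $C^\perp=(g^*h^*,3f^*h^*)$, the cardinality count that closes $\mathrm{Ann}(C)^*\subseteq C^\perp$ into an equality, and uniqueness of the standard-form triple, with the remark that $h=f^*$ is redundant once $f=h^*$ because $*$ is an involution. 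Your write-up is therefore strictly more self-contained: it supplies exactly the adaptation of the cited $\mathbb{Z}_4$ argument that the paper leaves to analogy, and it verifies the two points where that analogy actually needs checking (the chain-ring behavior of annihilators in each component, and the Frobenius-type cardinality identity $|C|\,|C^\perp|=9^n$).
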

\begin{proof}
First, by $\mathcal {C}^\perp=u\mathcal {C}_1^\perp\oplus (1-u)\mathcal {C}_2^\perp$, we have $\mathcal {C}^\perp$ is also a cyclic code if $\mathcal {C}$ is a cyclic code. Moreover, by Proposition 3, we have $\mathcal {C}$ is Euclidean self-dual over $R$ if and only if $\mathcal {C}_1$ and $\mathcal {C}_2$ are both Euclidean self-dual over $\mathbb{Z}_9$. Then, similar to Theorem 12.5.10 in \cite{Huffuman}, we deduce the result.
\end{proof}
By Proposition 3 and Theorem 4.4 in \cite{Batoul}, we have the following result immediately.
\begin{lemma}
Non-zero Euclidean cyclic self-dual codes of length $n$ exist over $R$ if and only if $3^j\not\equiv -1~({\rm mod}n)$ for any $j$.
\end{lemma}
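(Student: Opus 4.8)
The plan is to reduce the existence question over the non-chain ring $R$ to the corresponding question over $\mathbb{Z}_9$, where the cited Theorem 4.4 of \cite{Batoul} applies directly. First I would invoke Lemma 1: every cyclic code of length $n$ over $R$ has the form $\mathcal{C}=u\mathcal{C}_1\oplus(1-u)\mathcal{C}_2$ with $\mathcal{C}_1,\mathcal{C}_2$ cyclic over $\mathbb{Z}_9$, and conversely any such pair yields a cyclic code over $R$. Combining this with Proposition 3, $\mathcal{C}$ is Euclidean self-dual over $R$ if and only if both $\mathcal{C}_1$ and $\mathcal{C}_2$ are Euclidean self-dual cyclic codes over $\mathbb{Z}_9$. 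This is the structural core of the argument, since it converts a single statement over $R$ into twin statements over the chain ring $\mathbb{Z}_9$, where a classification is available.

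Next I would establish that a non-zero Euclidean cyclic self-dual code of length $n$ exists over $R$ if and only if such a code exists over $\mathbb{Z}_9$ (under the standing hypothesis $\gcd(n,9)=1$). For the reverse direction, if $C$ is a non-zero cyclic self-dual code over $\mathbb{Z}_9$, then taking $\mathcal{C}_1=\mathcal{C}_2=C$ and forming $\mathcal{C}=uC\oplus(1-u)C$ produces, by Lemma 1 and Proposition 3, a non-zero cyclic self-dual code over $R$. For the forward direction, decomposing a given non-zero cyclic self-dual $\mathcal{C}$ as above exhibits $\mathcal{C}_1$ as a cyclic self-dual code over $\mathbb{Z}_9$; here I would record that self-duality over $\mathbb{Z}_9$ forces $|\mathcal{C}_1|=9^{n/2}$, which is strictly larger than $1$ for $n\geq 1$, so $\mathcal{C}_1$ is automatically non-zero. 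Hence the two existence questions are equivalent.

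Finally I would apply Theorem 4.4 of \cite{Batoul}, which characterizes the existence of non-zero cyclic self-dual codes of length $n$ over $\mathbb{Z}_9$ by the condition $3^j\not\equiv -1\pmod{n}$ for all $j$; transporting this through the equivalence of the previous paragraph yields the statement. The content of that condition is that $-1$ is not a power of $3$ in $(\mathbb{Z}/n\mathbb{Z})^\times$, equivalently that no $3$-cyclotomic coset modulo $n$ is fixed by negation, which is exactly the obstruction to pairing the reciprocal factors $f\leftrightarrow f^*$ of $X^n-1$ needed to assemble a self-dual cyclic code as in Lemma 4. Since the deep input is packaged in the cited theorem, I expect no serious analytical obstacle; the only delicate point will be the bookkeeping around the word ``non-zero'', namely verifying that self-duality over $\mathbb{Z}_9$ already guarantees non-triviality via the cardinality $9^{n/2}$, so that the existence statements over $R$ and over $\mathbb{Z}_9$ match without any extra hypothesis on $n$.
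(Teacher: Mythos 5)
Your proposal is correct and follows essentially the same route as the paper: the paper's entire proof is the one-line reduction via Proposition 3 (with Lemma 1 supplying cyclicity of the components) followed by citing Theorem 4.4 of \cite{Batoul} for $\mathbb{Z}_9$. You have simply made explicit the two-directional existence equivalence and the ``non-zero'' cardinality bookkeeping that the paper treats as immediate.
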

\par
In the following, we consider some properties of the generating idempotents of cyclic codes over $R$. An element $e(X)\in \mathcal {C}$ is called an idempotent element if $e(X)^2=e(X)$ in $R_n$.
\begin{lemma}
Let $\mathcal {C}$ be a cyclic code of length $n$. Then there exists a unique idempotent element $e(X)=ue_1(X)+(1-u)e_2(X)\in R[X]$ such that $\mathcal {C}=(e(X))$.
\end{lemma}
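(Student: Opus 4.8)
The plan is to reduce the statement to the analogous fact over $\mathbb{Z}_9$ and then lift the resulting idempotents to $R_n=R[X]/(X^n-1)$ using the orthogonality relations $u^2=u$, $(1-u)^2=1-u$, and $u(1-u)=0$. By Lemma 1 we may write $\mathcal{C}=u\mathcal{C}_1\oplus(1-u)\mathcal{C}_2$ with $\mathcal{C}_1,\mathcal{C}_2$ cyclic over $\mathbb{Z}_9$, so it suffices to treat each component and then recombine.

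The first step is to produce a generating idempotent for each $\mathcal{C}_i$ over $\mathbb{Z}_9$: a polynomial $e_i(X)\in\mathbb{Z}_9[X]/(X^n-1)$ with $e_i(X)^2=e_i(X)$ and $\mathcal{C}_i=(e_i(X))$. This is where the hypothesis $\gcd(n,9)=1$ enters essentially: it makes $X^n-1$ separable, so that by Hensel lifting (equivalently, by the Chinese Remainder Theorem applied to the pairwise coprime basic irreducible factors of $X^n-1$) the ring $\mathbb{Z}_9[X]/(X^n-1)$ splits as a direct product of local rings, and $\mathcal{C}_i$ is singled out by a suitable idempotent. I would invoke the standard theory of cyclic codes over $\mathbb{Z}_9$ here rather than reprove it.

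With $e_1,e_2$ in hand, set $e(X)=ue_1(X)+(1-u)e_2(X)$. Expanding $e(X)^2$ and using $u(1-u)=0$, $u^2=u$, $(1-u)^2=1-u$ together with $e_i^2=e_i$ collapses the cross terms and yields $e(X)^2=ue_1(X)+(1-u)e_2(X)=e(X)$, so $e(X)$ is idempotent. For the generation claim, note first that $e(X)\in u\mathcal{C}_1\oplus(1-u)\mathcal{C}_2=\mathcal{C}$, whence $(e(X))\subseteq\mathcal{C}$. Conversely, a typical element of $\mathcal{C}$ has the form $c=ua(X)e_1(X)+(1-u)b(X)e_2(X)$ with $a,b$ polynomials over $\mathbb{Z}_9$; the same orthogonality relations give $c=\bigl(ua(X)+(1-u)b(X)\bigr)e(X)$, so $c\in(e(X))$ and therefore $\mathcal{C}=(e(X))$.

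Uniqueness is the routine part: if $e(X)$ and $e'(X)$ are both idempotent generators of $\mathcal{C}$, then $e\in(e')$ and $e'\in(e)$ force $ee'=e$ and $ee'=e'$, so commutativity of $R_n$ gives $e=e'$. I expect the main obstacle to be the existence step over $\mathbb{Z}_9$: because $\mathbb{Z}_9$ is not a field, one cannot simply transport the field-case argument, and justifying that the prescribed cyclic code is genuinely cut out by an idempotent is precisely the delicate point where the coprimality $\gcd(n,9)=1$ must be exploited with care; the entire weight of the lemma rests on this reduction.
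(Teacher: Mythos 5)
Your proposal is correct and follows essentially the same route as the paper: decompose $\mathcal{C}=u\mathcal{C}_1\oplus(1-u)\mathcal{C}_2$, invoke the standard existence of generating idempotents over $\mathbb{Z}_9$ (using $\gcd(n,9)=1$), form $e(X)=ue_1(X)+(1-u)e_2(X)$, and prove uniqueness by mutual divisibility. The only cosmetic difference is that you verify the generation claim $\mathcal{C}=(e(X))$ directly via the orthogonality of $u$ and $1-u$, where the paper cites its Lemma 2; the content is the same.
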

\begin{proof}
Since ${\rm gcd}(n,9)=1$, it follows that there exist unique idempotent elements $e_1(X), e_2(X) \in \mathbb{Z}_9[X]$ such that $\mathcal {C}_1=(e_1(X))$ and $\mathcal {C}_2=(e_2(X))$. By Lemma 2, we have $\mathcal {C}=(ue_1(X)+(1-u)e_2(X))$. Let $e(X)=ue_1(X)+(1-u)e_2(X)$. Then $e(X)^2=ue_1(X)^2+(1-u)e_2(X)^2=ue_1(X)+(1-u)e_2(X)=e(X)$, which implies that $e(X)$ is an idempotent element of $\mathcal {C}$. If there is another $d(X)\in \mathcal {C}$ such that $\mathcal {C}=(d(X))$ and $d(X)^2=d(X)$. Since $d(X)\in \mathcal {C}=(e(X))$, we have that $d(X)=a(X)e(X)$ for some $a(X)\in R_n$. And then, $d(X)e(X)=a(X)e(X)^2=d(X)$. Similarly, we can prove $d(X)e(X)=e(X)$, which implies that $e(X)$ is unique.
\end{proof}

The idempotent element $e(X)$ in Lemma 5 is called the generating idempotent of $\mathcal {C}$.
\begin{lemma}
Let $\mathcal {C}=u\mathcal {C}_1\oplus (1-u)\mathcal {C}_2$ be a cyclic code of length $n$ over $R$. Let $e(X)=ue_1(X)+(1-u)e_2(X)$, where $e_1(X)$ and $e_2(X)$ are generating idempotents of $\mathcal {C}_1$ and $\mathcal {C}_2$ over $\mathbb{Z}_9$, respectively. Then the Euclidean dual code $\mathcal {C}^\perp$ has $1-e(X^{-1})$ as its generating idempotent.
\end{lemma}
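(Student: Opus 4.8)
The plan is to reduce the statement to the behaviour of generating idempotents over $\mathbb{Z}_9$ and then reassemble via the $u/(1-u)$ decomposition. First I would invoke Proposition 3 to write $\mathcal{C}^\perp = u\mathcal{C}_1^\perp \oplus (1-u)\mathcal{C}_2^\perp$, and recall (as already noted in the proof of Lemma 3) that $\mathcal{C}^\perp$ is again cyclic, since $\mathcal{C}_1^\perp$ and $\mathcal{C}_2^\perp$ are cyclic over $\mathbb{Z}_9$. By Lemma 5 the code $\mathcal{C}^\perp$ therefore has a unique generating idempotent, and that idempotent is forced to be of the form $u\varepsilon_1 + (1-u)\varepsilon_2$, where $\varepsilon_1$ and $\varepsilon_2$ are the generating idempotents of $\mathcal{C}_1^\perp$ and $\mathcal{C}_2^\perp$ over $\mathbb{Z}_9$, respectively.

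The core of the argument is the classical identity over $\mathbb{Z}_9$: since $\mathrm{gcd}(n,9)=1$ and $e_i(X)$ generates $\mathcal{C}_i$, the generating idempotent of $\mathcal{C}_i^\perp$ is $\varepsilon_i = 1 - e_i(X^{-1})$ for $i=1,2$. I would establish this in two steps. First, the substitution $X \mapsto X^{-1}$ (equivalently $X^j \mapsto X^{n-j}$) is a ring automorphism of $\mathbb{Z}_9[X]/(X^n-1)$, so $e_i(X^{-1})$ is again idempotent and hence so is $1 - e_i(X^{-1})$. Second, I would use the standard correspondence between the Euclidean inner product and polynomial multiplication: for $a(X), c(X)$ in $\mathbb{Z}_9[X]/(X^n-1)$ the constant term of $a(X)c(X^{-1})$ equals the inner product of the coefficient vectors, so $a \in \mathcal{C}_i^\perp$ if and only if $a(X)c(X^{-1}) = 0$ for every $c(X) \in \mathcal{C}_i$ (the vanishing of \emph{all} coefficients encodes orthogonality to every cyclic shift, which is automatic because $\mathcal{C}_i$ is cyclic). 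Taking $c(X)=e_i(X)$ and using idempotency then identifies $\mathcal{C}_i^\perp$ as the ideal generated by $1 - e_i(X^{-1})$. Alternatively one may simply cite the idempotent characterization of duals in \cite{Huffuman}, which passes verbatim from finite fields to $\mathbb{Z}_9$ because $\mathbb{Z}_9[X]/(X^n-1)$ is a principal ideal ring splitting as a product of local rings when $\mathrm{gcd}(n,9)=1$.

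Finally I would assemble the pieces. Substituting $X\mapsto X^{-1}$ into $e(X)=ue_1(X)+(1-u)e_2(X)$ gives $e(X^{-1})=ue_1(X^{-1})+(1-u)e_2(X^{-1})$, so
\[
u\varepsilon_1 + (1-u)\varepsilon_2 = u\bigl(1 - e_1(X^{-1})\bigr) + (1-u)\bigl(1 - e_2(X^{-1})\bigr) = 1 - e(X^{-1}),
\]
using $u+(1-u)=1$. Thus $1 - e(X^{-1})$ is an idempotent generating $\mathcal{C}^\perp$, and by the uniqueness part of Lemma 5 it is \emph{the} generating idempotent of $\mathcal{C}^\perp$.

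I expect the main obstacle to be the $\mathbb{Z}_9$ base case, namely justifying $\varepsilon_i = 1 - e_i(X^{-1})$ cleanly. Over a field this is textbook, but over $\mathbb{Z}_9$ one must either perform the inner-product/polynomial computation with care or appeal to the fact that the structure theory of generating idempotents survives unchanged under the hypothesis $\mathrm{gcd}(n,9)=1$. Everything after that reduction is the short computation displayed above.
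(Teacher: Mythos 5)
Your proposal is correct and follows essentially the same route as the paper: decompose $\mathcal{C}^\perp = u\mathcal{C}_1^\perp \oplus (1-u)\mathcal{C}_2^\perp$ via Proposition 3, invoke the classical fact that over $\mathbb{Z}_9$ the dual $\mathcal{C}_i^\perp$ has generating idempotent $1-e_i(X^{-1})$, and reassemble through the uniqueness in Lemma 5 to get $1-e(X^{-1})$. The only difference is that you justify the $\mathbb{Z}_9$ base case (which the paper simply asserts), a welcome addition but not a change of method.
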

\begin{proof}
By Proposition 3, we have $\mathcal {C}^\perp=u\mathcal {C}_1^\perp \oplus (1-u)\mathcal {C}_2^\perp$. Moreover, $\mathcal {C}^\perp$ is also a cyclic code since $\mathcal {C}_1^\perp$ and $\mathcal {C}_2^\perp$ are both cyclic codes. Let $e_1(X)$ and $e_2(X)$ be generating idempotents of $\mathcal {C}_1$ and $\mathcal {C}_2$, respectively. Then $\mathcal {C}_1^\perp$ and $\mathcal {C}_2^\perp$ have $1-e_1(X^{-1})$ and $1-e_2(X^{-1})$ as their generating idempotents respectively. Let $\widetilde{e}(X)$ be the generating idempotent of $\mathcal {C}^\perp$. Then, by Lemma 5, $\widetilde{e}(X)=u(1-e_1(X^{-1}))+(1-u)(1-e_2(X^{-1}))=1-e(X^{-1})$.
\end{proof}

Let $p$ be a prime number with $p\equiv \pm 1 ({\rm mod}12)$. Let $\mathcal {Q}_p$ denote the set of nonzero quadratic residues modulo $p$, and let $\mathcal {N}_p$ be the set of quadratic non-residues modulo $p$.
\par
Let $Q(X)=\sum_{i\in \mathcal {Q}_p}X^i$, $N(X)=\sum_{i\in \mathcal {N}_p}X^i$ and $J(X)=1+Q(X)+N(X)$. By Lemma 5 and Theorem 6 in \cite{Taeri}, we have the following results immediately.
\begin{lemma}
 \vskip 3mm \noindent
   {\bf  I:~~} Suppose that $p=12r-1$ and $k$ is a positive integer.\\
 {\rm (i)}~If $r=3k$, let the set $S_0$ be $\{8Q(X), 8N(X), 8J(X), 1+Q(X), 1+N(X), 1+2J(X)\}$.\\
 {\rm (ii)}~If $r=3k+1$, let the set $S_1$ be $\{3+6Q(X)+8N(X), 3+6N(X)+8Q(X), 7+Q(X)+3N(X), 7+N(X)+3Q(X), 5J(X), 1+5J(X)\}$.\\
 {\rm (iii)}~If $r=3k+2$, let the set $S_2$ be $\{6+3Q(X)+8N(X), 6+3N(X)+8Q(X), 4+Q(X)+6N(X), 4+N(X)+6Q(X), 2J(X), 8+8J(X)\}$.
  \vskip 3mm \noindent
   {\bf  II:~~}Suppose that $p=12r+1$ and $k$ is a positive integer\\
 {\rm (i)}~If $r=3k$, let the set $T_0$ be $\{1+Q(X), 1+N(X), 8Q(X), 8N(X), J(X), 1+J(X)\}$.\\
 {\rm (ii)}~If $r=3k+1$, let the set $T_1$ be $\{4+Q(X)+6N(X), 4+N(X)+6Q(X), 6+3Q(X)+8N(X), 6+3N(X)+8Q(X), 7J(X), 1+2J(X)\}$.\\
 {\rm (iii)}~If $r=3k+2$, let the set $T_2$ be $\{7+Q(X)+3N(X), 7+N(X)+3Q(X), 3+6Q(X)+8N(X), 3+6N(X)+8Q(X), 4J(X), 1+5J(X)\}$.\\
 Then, for any $e_1(X), e_2(X)\in S_0$ or $S_1$ or $S_2$ or $T_0$ or $T_1$ or $T_2$, $e(X)=ue_1(X)+(1-u)e_2(X)$ is the idempotent of $R_p$.
\end{lemma}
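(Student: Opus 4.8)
The plan is to reduce the whole assertion to the single identity $e(X)^2 = e(X)$ in $R_p = R[X]/(X^p-1)$, and to settle that identity by exploiting the orthogonal idempotent decomposition of $R$ rather than wrestling directly with the quadratic-residue polynomials.

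First I would isolate the one genuinely arithmetic ingredient, which the statement advertises as imported. Theorem 6 in \cite{Taeri} supplies, for each residue class of $p$ modulo $12$ and each value of $r$ modulo $3$ (where $p = 12r-1$ in case I and $p = 12r+1$ in case II), exactly the polynomials collected in the corresponding set $S_0, S_1, S_2, T_0, T_1, T_2$, and it guarantees that every such polynomial $e_j(X)$ is an idempotent of $\mathbb{Z}_9[X]/(X^p-1)$, i.e. $e_j(X)^2 = e_j(X)$. This is the only place where the multiplicative structure of $Q(X)$, $N(X)$ and $J(X)$ — in particular the products $Q(X)^2$, $N(X)^2$ and $Q(X)N(X)$ in $\mathbb{Z}_9[X]/(X^p-1)$, which hinge on whether $-1$ is a quadratic residue and on counting representations of a fixed element as a product of two residues — actually enters, and it is precisely what I take as given.

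Next I would let the orthogonality of $u$ and $1-u$ do the remaining work, exactly as in the proof of Lemma 5. Since $u^2 = u$, $(1-u)^2 = 1-u$ and $u(1-u) = 0$ in $R$, for any $e_1(X), e_2(X)$ one has
\[
\bigl(u e_1(X) + (1-u) e_2(X)\bigr)^2 = u^2\, e_1(X)^2 + (1-u)^2\, e_2(X)^2 = u\, e_1(X)^2 + (1-u)\, e_2(X)^2
\]
in $R_p$, the cross term $2u(1-u)e_1(X)e_2(X)$ vanishing. Choosing $e_1(X), e_2(X)$ from any of the six sets and substituting the $\mathbb{Z}_9$-idempotency $e_1(X)^2 = e_1(X)$, $e_2(X)^2 = e_2(X)$ from the previous step collapses the right-hand side to $u e_1(X) + (1-u) e_2(X) = e(X)$, which is the claim. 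Equivalently, this is the content of Lemma 5 read in the reverse direction.

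I do not expect a real obstacle: once the $\mathbb{Z}_9$-idempotency of the listed polynomials is cited from \cite{Taeri}, the statement over $R$ is immediate from the orthogonal-idempotent calculation above. The only point demanding care is the bookkeeping — checking that each set $S_i$, $T_i$ is attached to the correct case $p = 12r \mp 1$ with $r \equiv i \pmod 3$ as classified in \cite{Taeri}, so that the invoked idempotency genuinely applies to the polynomials named here. The substantive quadratic-residue identities underlying that idempotency are the hard part of the argument, but they reside entirely inside the cited theorem and need not be reproved.
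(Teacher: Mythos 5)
Your proposal is correct and matches the paper's own argument: the paper likewise obtains this lemma ``immediately'' by citing Theorem 6 of \cite{Taeri} for the $\mathbb{Z}_9$-idempotency of the listed polynomials and then applying the orthogonal-idempotent computation $(ue_1+(1-u)e_2)^2 = ue_1^2+(1-u)e_2^2$, which is precisely the calculation inside its Lemma 5. Nothing is missing; the only caveat you flag (matching each set to the correct congruence class of $p$ and $r$) is the same bookkeeping the paper implicitly relies on.
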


We now discuss the quadratic residue codes over $R$. First, we give the definitions of these codes. The definitions depend upon the value $p$ modulo $12$.
\begin{definition}
Let $p=12r-1$. \\
{\rm (i)}~If $r=3k$, define
\begin{equation*}
\mathcal {D}_1=(u(8Q(X))+(1-u)(8N(X))),
\end{equation*}
\begin{equation*}
\mathcal {D}_2=(u(8N(X))+(1-u)(8Q(X))),
\end{equation*}
\begin{equation*}
\mathcal {E}_1=(u(1+N(X))+(1-u)(1+Q(X))),
\end{equation*}
\begin{equation*}
\mathcal {E}_2=(u(1+Q(X))+(1-u)(1+N(X))).
\end{equation*}
{\rm (ii)}~If $r=3k+1$, define
\begin{equation*}
\mathcal {D}_1=(u(3+6Q(X)+8N(X))+(1-u)(3+6N(X)+8Q(X))),
\end{equation*}
\begin{equation*}
\mathcal {D}_2=(u(3+6N(X)+8Q(X))+(1-u)(3+6Q(X)+8N(X))),
\end{equation*}
\begin{equation*}
\mathcal {E}_1=(u(7+Q(X)+3N(X))+(1-u)(7+N(X)+3Q(X))),
\end{equation*}
\begin{equation*}
\mathcal {E}_2=(u(7+N(X)+3Q(X))+(1-u)(7+Q(X)+3N(X))).
\end{equation*}
{\rm (iii)}~If $r=3k+2$, define
\begin{equation*}
\mathcal {D}_1=(u(6+3Q(X)+8N(X))+(1-u)(6+3N(X)+8Q(X))),
\end{equation*}
\begin{equation*}
\mathcal {D}_2=(u(6+3N(X)+8Q(X))+(1-u)(6+3Q(X)+8N(X))),
\end{equation*}
\begin{equation*}
\mathcal {E}_1=(u(4+Q(X)+6N(X))+(1-u)(4+N(X)+6Q(X))),
\end{equation*}
\begin{equation*}
\mathcal {E}_2=(u(4+N(X)+6Q(X))+(1-u)(4+Q(X)+6N(X))).
\end{equation*}
These cyclic codes of length $p$ are called the quadratic residue codes over $R$ at the case I.
\end{definition}
\par\vskip 2mm
Let $a$ be a non-zero positive integer defined as $\mu_a(i)=ai$ for any positive integer $i$. This map acts on polynomials as
\begin{equation*}
\mu_a(\sum_iX^i)=\sum_iX^{ai}.
\end{equation*}
\begin{theorem}
Let $p=12r-1$. Then the quadratic residue codes defined above satisfy the following:\\
{\rm (i)}~$\mathcal {D}_i\mu_a=\mathcal {D}_i$ and $\mathcal {E}_i\mu_a=\mathcal {E}_i$ for $i=1,2$ and $a\in \mathcal {Q}_p$; $\mathcal {D}_1\mu_a=\mathcal {D}_2$ and $\mathcal {E}_1\mu_a=\mathcal {E}_2$ for $a\in \mathcal {N}_p$.\\
{\rm (ii)}~$\mathcal {D}_1\cap \mathcal {D}_2=(K(X))$ and $\mathcal {D}_1+\mathcal {D}_2=R_p$, where $K(X)$ is a suitable element of $\{2J(X), 5J(X), 8J(X)\}$.\\
{\rm (iii)}~$\mathcal {E}_1\cap\mathcal {E}_2=\{0\}$ and $\mathcal {E}_1 + \mathcal {E}_2=(K(X))^\perp$.\\
{\rm (iv)}~$|\mathcal {D}_1|=|\mathcal {D}_2|=9^{p+1}$ and $|\mathcal {E}_1|=|\mathcal {E}_2|=9^{p-1}$.\\
{\rm (v)}~$\mathcal {D}_i=\mathcal {E}_i+(K(X))$ for $i=1,2$.\\
{\rm (vi)}~$\mathcal {E}_1$ and $\mathcal {E}_2$ are Euclidean self-orthogonal and $\mathcal {E}_i^\perp=\mathcal {D}_i$ for $i=1,2$.
\end{theorem}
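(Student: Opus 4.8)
\section*{Proof proposal for Theorem 1}

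The plan is to reduce every assertion to the component codes over $\mathbb{Z}_9$ through the decomposition $\mathcal{C}=u\mathcal{C}_1\oplus(1-u)\mathcal{C}_2$ and then to exploit the symmetry supplied by the multiplier $\mu_{-1}$. Throughout I use that $p=12r-1$ forces $p\equiv 3\pmod 4$, so $-1\in\mathcal{N}_p$, and $p\equiv\pm1\pmod{12}$, so $3\in\mathcal{Q}_p$; the latter is exactly what makes $\mathcal{Q}_p$ and $\mathcal{N}_p$ unions of $3$-cyclotomic cosets and hence makes the idempotents of Lemma 6 well defined over $\mathbb{Z}_9$. Part (i) is the cleanest: since $\mu_a$ sends $X^i\mapsto X^{ai}$, it permutes $\mathcal{Q}_p$ within itself and $\mathcal{N}_p$ within itself when $a\in\mathcal{Q}_p$, interchanges $\mathcal{Q}_p$ and $\mathcal{N}_p$ when $a\in\mathcal{N}_p$, and always fixes $J(X)$. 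Reading this off the generating idempotents (and using Lemma 5), $\mu_a$ fixes each generator for $a\in\mathcal{Q}_p$ and, for $a\in\mathcal{N}_p$, carries the generator of $\mathcal{D}_1$ to that of $\mathcal{D}_2$ and the generator of $\mathcal{E}_1$ to that of $\mathcal{E}_2$, because interchanging $Q\leftrightarrow N$ is precisely what distinguishes the paired generators. This gives (i).

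For the algebraic core I would first establish the multiplication table of $Q(X),N(X),J(X)$ in $R_p$. A short counting argument, using $-1\in\mathcal{N}_p$ for the constant coefficient and $\mu_a$-invariance ($a\in\mathcal{Q}_p$) together with the $\mu_{-1}$-symmetry to cut down to two unknown coefficients, yields over $\mathbb{Z}$ the identities $QN=\tfrac{p+1}{4}+\tfrac{p-3}{4}J$ and $Q^2=-\tfrac{p+1}{4}-Q+\tfrac{p+1}{4}J$, with the symmetric formula for $N^2$. Since each $\mathcal{D}_i$ and $\mathcal{E}_i$ splits as $u(\,\cdot\,)\oplus(1-u)(\,\cdot\,)$ with quadratic-residue idempotent components over $\mathbb{Z}_9$, and since $R_p$ is principal (Proposition 4), the intersection of two idempotent ideals is generated by the product of their idempotents and the sum by $e+f-ef$. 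Thus (ii) and the value of $K$ reduce to evaluating the product of the two $\mathbb{Z}_9$-idempotents modulo $9$ in each case $r\equiv 0,1,2\pmod 3$; carrying this out gives $K=8J,\,5J,\,2J$ respectively and shows the join idempotent equals $1$, i.e. $\mathcal{D}_1+\mathcal{D}_2=R_p$.

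The duality statement (vi) is the linchpin, and here the symmetry pays off. By Lemma 4 the generating idempotent of $\mathcal{E}_i^\perp$ is $1-e_{\mathcal{E}_i}(X^{-1})$; since $Q(X^{-1})=\mu_{-1}(Q)=N(X)$ and $N(X^{-1})=Q(X)$ (again $-1\in\mathcal{N}_p$), a one-line substitution turns $1-e_{\mathcal{E}_i}(X^{-1})$ into exactly the idempotent defining $\mathcal{D}_i$, the sign change producing the factor $8=-1$. Hence $\mathcal{E}_i^\perp=\mathcal{D}_i$, and self-orthogonality follows once $\mathcal{E}_i\subseteq\mathcal{D}_i$, which is (v). With (vi) available, (iii) comes for free by dualizing (ii): $(\mathcal{E}_1\cap\mathcal{E}_2)^\perp=\mathcal{D}_1+\mathcal{D}_2=R_p$ forces $\mathcal{E}_1\cap\mathcal{E}_2=\{0\}$, while $(\mathcal{E}_1+\mathcal{E}_2)^\perp=\mathcal{D}_1\cap\mathcal{D}_2=(K)$ gives the sum.

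The cardinalities (iv) follow from the rank $(p\pm1)/2$ of the augmented and expurgated quadratic residue codes over $\mathbb{Z}_9$ together with $\mathcal{E}_i^\perp=\mathcal{D}_i$, giving $|\mathcal{D}_i|=9^{p+1}$ and $|\mathcal{E}_i|=9^{p-1}$. Finally (v) follows from the inclusions $\mathcal{E}_i\subseteq\mathcal{D}_i$ and $(K)\subseteq\mathcal{D}_i$ combined with a cardinality count: $|(K)|=81$ (since $|\mathcal{D}_1||\mathcal{D}_2|=|\mathcal{D}_1+\mathcal{D}_2|\,|\mathcal{D}_1\cap\mathcal{D}_2|$) and $\mathcal{E}_i\cap(K)=\{0\}$ because $\mathcal{E}_i$ is even-like while $(K)$ is spanned by the all-ones element, so $|\mathcal{E}_i+(K)|=9^{p-1}\cdot 81=9^{p+1}=|\mathcal{D}_i|$. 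The main obstacle is entirely computational: assembling the $Q,N,J$ multiplication table and reducing the idempotent products modulo $9$ correctly across all three residue classes of $r$; the structural claims, namely the multiplier action, the duality $\mathcal{E}_i^\perp=\mathcal{D}_i$, and the intersection, sum and cardinality relations, then drop out.
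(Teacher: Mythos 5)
Your overall strategy --- idempotent calculus in $R_p$, carried out componentwise via the decomposition $u(\,\cdot\,)\oplus(1-u)(\,\cdot\,)$, plus the $\mu_{-1}$ symmetry for duality --- is the same as the paper's. Parts (i), (ii) and the duality half of (vi) match the paper's argument (the paper computes the idempotent products by the slicker identity $e_1e_2=e_1(e_1+e_2)-e_1$ together with $QJ=NJ=\frac{p-1}{2}J$, rather than your full $Q,N,J$ multiplication table, but that is only a difference of bookkeeping). Your (iii) and (iv) genuinely differ: the paper computes the product and sum of the $\mathcal{E}$-idempotents directly and then reads (iv) off (i)--(iii), whereas you dualize (ii) through $\mathcal{E}_i^\perp=\mathcal{D}_i$ and import the ranks $(p\pm1)/2$ of the component $\mathbb{Z}_9$ quadratic residue codes. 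Both routes are legitimate; note only that $(\mathcal{E}_1\cap\mathcal{E}_2)^\perp=\mathcal{E}_1^\perp+\mathcal{E}_2^\perp$ requires the double-annihilator property, which holds here because $R\simeq\mathbb{Z}_9\times\mathbb{Z}_9$ is Frobenius.

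The genuine gap is in (v), and it propagates into (vi). You prove (v) from ``the inclusions $\mathcal{E}_i\subseteq\mathcal{D}_i$ and $(K)\subseteq\mathcal{D}_i$ combined with a cardinality count,'' and you prove self-orthogonality in (vi) by citing (v) as the source of $\mathcal{E}_i\subseteq\mathcal{D}_i$. So the inclusion $\mathcal{E}_i\subseteq\mathcal{D}_i$ is never established: (vi) defers it to (v), and (v) assumes it. This inclusion is not a formality --- it is essentially the content of (v) --- and nothing in (i)--(iv) or in the duality $\mathcal{E}_i^\perp=\mathcal{D}_i$ delivers it. The fix is a short check with the multiplication table you already assembled: in the case $r=3k$ (so $p=36k-1$) one has $\frac{p+1}{4}\equiv 0$ and $\frac{p-3}{4}\equiv 8 \pmod 9$, hence $QN\equiv 8J \pmod 9$, and therefore
\begin{equation*}
(1+N)(8Q)=8Q+8QN\equiv 8Q+64J\equiv 8Q+J=1+9Q+N\equiv 1+N \pmod 9,
\end{equation*}
with the symmetric computation on the $(1-u)$-component; this gives $f_1e_1=f_1$ for the generating idempotents, i.e. $\mathcal{E}_1\subseteq\mathcal{D}_1$, and similarly in the other residue classes of $r$. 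Alternatively you can do what the paper does: compute the join idempotent $f_1+K-f_1K$ of $\mathcal{E}_1+(K)$ outright and verify it equals $e_1$, which yields (v) (and hence the inclusion and the self-orthogonality in (vi)) with no counting argument at all. Once one of these computations is inserted, your proof is complete.
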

\begin{proof}
Let $p=12r-1$. We only verify when $r=3k$. The proof of other cases are similar.

(i)~If $a\in \mathcal {Q}_p$, then $(u(8Q(X))+(1-u)(8N(X))\mu_a=u(8Q(X))+(1-u)(8N(X))$, which implies that $\mathcal {D}_1\mu_a=\mathcal {D}_1$. Similarly, $\mathcal {D}_2\mu_a=\mathcal {D}_2$.

If $a\in \mathcal {N}_p$, then $(u(8Q(X))+(1-u)(8N(X))\mu_a=u(8N(X))+(1-u)(8Q(X))$, which implies that $\mathcal {D}_1\mu_a=\mathcal {D}_2$.

The parts of (i) involving $\mathcal {E}_i$ are similar.

(ii)~Since $u(8Q(X))+(1-u)(8N(X))+u(8N(X))+(1-u)(8Q(X))=8J(X)+1$, it follows that $(u(8Q(X))+(1-u)(8N(X)))(8J(X))=(u(8Q(X))+(1-u)(8N(X)))(u(8N(X))+(1-u)(8Q(X)))$.

On the other hand, $(u(8Q(X))+(1-u)(8N(X)))(8J(X))=\frac{p-1}{2}J(X)$. Since $p=12r-1$ and $r=3k$, we have $\frac{p-1}{2}\equiv 8({\rm mod}9)$ implying
$(u(8Q(X))+(1-u)(8N(X)))(u(8N(X))+(1-u)(8Q(X)))=8J(X)$. It means that $\mathcal {D}_1\cap \mathcal {D}_2=(K(X))=(8J(X))$ and $\mathcal {D}_1+\mathcal {D}_2=R_p$.

(iii)~Since $u(1+N(X))+(1-u)(1+Q(X))+u(1+Q(X))+(1-u)(1+N(X))=1+J(X)$, it follows that $(u(1+N(X))+(1-u)(1+Q(X)))J(X)=(u(1+N(X))+(1-u)(1+Q(X)))(u(1+Q(X))+(1-u)(1+N(X)))$.

On the other hand, $(u(1+N(X))+(1-u)(1+Q(X)))J(X)=\frac{p+1}{2}J(X)$. Since $p=12r-1$ and $r=3k$, we have $\frac{p+1}{2}\equiv 0({\rm mod}9)$ implying $(u(1+N(X))+(1-u)(1+Q(X)))(u(1+Q(X))+(1-u)(1+N(X)))=0$. It means that $\mathcal {E}_1\cap \mathcal {E}_2=\{0\}$ and $\mathcal {E}_1+\mathcal {E}_2=(1+J(X))=(K(X))^\perp$.

(iv)~This follows from (i)(ii)(iii) immediately.

(v)~From (ii), we have $K(X)\in \mathcal {D}_2$ implying that $(u(8N(X))+(1-u)(8Q(X)))K(X)=K(X)$ as $u(8N(X))+(1-u)(8Q(X))$ is the multiplicative identity of $\mathcal {D}_2$. Then $\mathcal {E}_1+(K(X))=u(1+N(X))+(1-u)(1+Q(X))+K(X)-(u(1+N(X))+(1-u)(1+Q(X)))K(X)=u(1+N(X))+(1-u)(1+Q(X))+K(X)-(K(X)-(u(8N(X))+(1-u)(8Q(X)))K(X))=u(1+Q(X))+(1-u)(1+N(X))+K(X)
-(K(X)-K(X))=u(8Q(X))+(1-u)(8N(X))$, which implies that $\mathcal {E}_1+(K(X))=\mathcal {D}_1$. Similarly, $\mathcal {E}_2+(K(X))=\mathcal {D}_2$.

(vi)~From Lemma 14, the generating idempotent for $\mathcal {E}_1^\perp$ is $1-(u(1+N(X))+(1-u)(1+Q(X)))_{\mu_{-1}}=(u(8N(X))+(1-u)(8Q(X)))_{\mu_{-1}}$. Since $-1\in \mathcal {N}_p$ as $p=12r-1$, it follows that $N(X)_{\mu_{-1}}=Q(X)$ and $Q(X)_{\mu_{-1}}=N(X)$. Therefore the generating idempotent of $\mathcal {E}_1^\perp$ is $u(8Q(X))+(1-u)(8N(X))$, which implies that $\mathcal {E}_1^\perp=\mathcal {D}_1$. Similarly, $\mathcal {E}_2^\perp=\mathcal {D}_2$. From (v), $\mathcal {E}_i\subseteq \mathcal {D}_i$ implying $\mathcal {E}_i$ is self-orthogonal for $i=1,2$.
\end{proof}
\begin{definition}
Let $p=12r+1$.\\
(i)~If $r=3k$, define
\begin{equation*}
\mathcal {D}_1=(u(1+Q(X))+(1-u)(1+N(X)))
\end{equation*}
\begin{equation*}
\mathcal {D}_2=(u(1+N(X))+(1-u)(1+Q(X)))
\end{equation*}
\begin{equation*}
\mathcal {E}_1=(u(8N(X))+(1-u)(8Q(X)))
\end{equation*}
\begin{equation*}
\mathcal {E}_2=(u(8Q(X))+(1-u)(8N(X))).
\end{equation*}
(ii)~If $r=3k+1$, define
\begin{equation*}
\mathcal {D}_1=(u(4+Q(X)+6N(X))+(1-u)(4+6Q(X)+N(X))),
\end{equation*}
\begin{equation*}
\mathcal {D}_2=(u(4+6Q(X)+N(X))+(1-u)(4+Q(X)+6N(X))),
\end{equation*}
\begin{equation*}
\mathcal {E}_1=(u(6+3Q(X)+8N(X))+(1-u)(6+8Q(X)+3N(X))),
\end{equation*}
\begin{equation*}
\mathcal {E}_2=(u(6+8Q(X)+3N(X))+(1-u)(6+3Q(X)+8N(X))).
\end{equation*}
(iii)~If $r=3k+2$, define
\begin{equation*}
\mathcal {D}_1=(u(7+Q(X)+3N(X))+(1-u)(7+3Q(X)+N(X))),
\end{equation*}
\begin{equation*}
\mathcal {D}_2=(u(7+3Q(X)+N(X))+(1-u)(7+Q(X)+3N(X))),
\end{equation*}
\begin{equation*}
\mathcal {E}_1=(u(3+6Q(X)+8N(X))+(1-u)(3+8Q(X)+6N(X))),
\end{equation*}
\begin{equation*}
\mathcal {E}_2=(u(3+8Q(X)+6N(X))+(1-u)(3+6Q(X)+8N(X))).
\end{equation*}
These cyclic codes of length $p$ are called the quadratic residue codes over $R$ at the case II.
\end{definition}

Similar to Theorem 5, we also have the following results. Here we omit the proof.
\begin{theorem}
Let $p=12r+1$. Then the quadratic residue codes defined above satisfy the following:\\
{\rm (i)}~$\mathcal {D}_i\mu_a=\mathcal {D}_i$ and $\mathcal {E}_i\mu_a=\mathcal {E}_i$ for $i=1,2$ and $a\in \mathcal {Q}_p$; $\mathcal {D}_1\mu_a=\mathcal {D}_2$ and $\mathcal {E}_1\mu_a=\mathcal {E}_2$ for $a\in \mathcal {N}_p$.\\
{\rm (ii)}~$\mathcal {D}_1\cap \mathcal {D}_2=(K(X))$ and $\mathcal {D}_1+\mathcal {D}_2=R_p$, where $K(X)$ is a suitable element of $\{J(X), 4J(X), 7J(X)\}$.\\
{\rm (iii)}~$\mathcal {E}_1\cap\mathcal {E}_2=\{0\}$ and $\mathcal {E}_1 + \mathcal {E}_2=(K(X))^\perp$.\\
{\rm (iv)}~$|\mathcal {D}_1|=|\mathcal {D}_2|=9^{p+1}$ and $|\mathcal {E}_1|=|\mathcal {E}_2|=9^{p-1}$.\\
{\rm (v)}~$\mathcal {D}_i=\mathcal {E}_i+(K(X))$ for $i=1,2$.\\
{\rm (vi)}~$\mathcal {E}_1^\perp=\mathcal {D}_2$ and $\mathcal {E}_2^\perp=\mathcal {D}_1$.
\end{theorem}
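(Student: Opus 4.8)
The plan is to follow the proof of Theorem 5 almost verbatim, since the codes $\mathcal{D}_i$ and $\mathcal{E}_i$ of Definition 3 are built from the same idempotent pieces $Q(X)$, $N(X)$ and $J(X)$. Throughout I would reduce every assertion to a computation with generating idempotents in $R_p$, exploiting two basic facts. First, $X^iJ(X)=J(X)$ for every $i$, so that $Q(X)J(X)=N(X)J(X)=\frac{p-1}{2}J(X)$ and $J(X)^2=pJ(X)$. Second, the multiplier $\mu_a$ fixes both $Q(X)$ and $N(X)$ when $a\in\mathcal{Q}_p$ and interchanges them when $a\in\mathcal{N}_p$. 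The single genuinely new ingredient, and the only place where $p=12r+1$ departs from $p=12r-1$, is the arithmetic fact that $p\equiv1\pmod4$ forces $-1\in\mathcal{Q}_p$, whereas Theorem 5 had $-1\in\mathcal{N}_p$. For this reason I expect part (vi) to carry essentially all of the novelty.

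For (i) I would simply apply $\mu_a$ to the explicit generating idempotent of each code: when $a\in\mathcal{Q}_p$ both $Q(X)$ and $N(X)$ are fixed, so each idempotent is unchanged and $\mathcal{D}_i\mu_a=\mathcal{D}_i$, $\mathcal{E}_i\mu_a=\mathcal{E}_i$; when $a\in\mathcal{N}_p$ the roles of $Q(X)$ and $N(X)$ are swapped, which turns the idempotent of $\mathcal{D}_1$ into that of $\mathcal{D}_2$, and likewise for $\mathcal{E}$. For (ii) and (iii) I would write $e_{\mathcal{D}_1}$, $e_{\mathcal{D}_2}$ for the two generating idempotents and note from the definitions that $e_{\mathcal{D}_1}+e_{\mathcal{D}_2}=1+cJ(X)$ for an explicit constant $c$. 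Since the product of idempotents generates the intersection, the manipulation $e_{\mathcal{D}_1}e_{\mathcal{D}_2}=e_{\mathcal{D}_1}(e_{\mathcal{D}_1}+e_{\mathcal{D}_2})-e_{\mathcal{D}_1}^2=e_{\mathcal{D}_1}(1+cJ(X))-e_{\mathcal{D}_1}=c\,e_{\mathcal{D}_1}J(X)$ reduces everything to multiplication by $J(X)$, which collapses onto a single multiple of $J(X)$ via $Q(X)J(X)=N(X)J(X)=\frac{p-1}{2}J(X)$. Reading off the coefficient modulo $9$, equivalently $J(X)^2=pJ(X)$ with $p\equiv1,4,7\pmod9$ according as $r\equiv0,1,2\pmod3$, pins down $K(X)$ as $J(X)$, $4J(X)$ or $7J(X)$, while $e_{\mathcal{D}_1}+e_{\mathcal{D}_2}$ shows $\mathcal{D}_1+\mathcal{D}_2=R_p$. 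The analogous computation for $\mathcal{E}_1,\mathcal{E}_2$ gives $e_{\mathcal{E}_1}e_{\mathcal{E}_2}=0$, hence $\mathcal{E}_1\cap\mathcal{E}_2=\{0\}$, and identifies $\mathcal{E}_1+\mathcal{E}_2$ with $(K(X))^\perp$.

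Part (iv) is then a counting argument. From $|\mathcal{D}_1\cap\mathcal{D}_2|\,|\mathcal{D}_1+\mathcal{D}_2|=|\mathcal{D}_1|\,|\mathcal{D}_2|$ together with $|\mathcal{D}_1|=|\mathcal{D}_2|$ (which follows from (i), as $\mu_a$ is a bijection), $|\mathcal{D}_1+\mathcal{D}_2|=|R_p|=9^{2p}$, and $|(K(X))|=|(J(X))|=81$ (since $1,4,7$ are units modulo $9$ and $r\mapsto rJ(X)$ is injective on $R$), I obtain $|\mathcal{D}_i|=9^{p+1}$; the dual relation gives $|(K(X))^\perp|=9^{2p-2}$ and hence $|\mathcal{E}_i|=9^{p-1}$. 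For (v) I would repeat the idempotent manipulation of Theorem 5(v): because $K(X)$ lies in $\mathcal{D}_2$ and the relevant idempotent acts there as the identity, adding $(K(X))$ to $\mathcal{E}_i$ converts its generating idempotent precisely into that of $\mathcal{D}_i$.

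Finally, (vi) is the heart of the matter and the step I expect to need the most care to state cleanly, even though it is not deep. By the lemma giving $1-e(X^{-1})$ as the generating idempotent of the Euclidean dual, the generating idempotent of $\mathcal{E}_1^\perp$ is $1-e_{\mathcal{E}_1}\mu_{-1}$. Because $p=12r+1\equiv1\pmod4$ we have $-1\in\mathcal{Q}_p$, so $\mu_{-1}$ fixes both $Q(X)$ and $N(X)$; hence $e_{\mathcal{E}_1}\mu_{-1}=e_{\mathcal{E}_1}$ and $1-e_{\mathcal{E}_1}$ simplifies (using $-8\equiv1\pmod9$ in the $r=3k$ case) to the generating idempotent of $\mathcal{D}_2$, not of $\mathcal{D}_1$. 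This index swap is exactly the difference from Theorem 5, where $-1\in\mathcal{N}_p$ interchanged $Q(X)$ and $N(X)$ and therefore gave $\mathcal{E}_i^\perp=\mathcal{D}_i$. The same computation with the two idempotents exchanged yields $\mathcal{E}_2^\perp=\mathcal{D}_1$. As in Theorem 5, I would carry out the verification in full only for the representative sub-case $r=3k$; the sub-cases $r=3k+1$ and $r=3k+2$ differ only in the explicit constants inside the idempotents and in the value of $p\bmod9$, and are settled by the identical argument.
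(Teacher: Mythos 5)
Your proposal is correct and is essentially the paper's own proof: the paper omits the argument for this theorem precisely because it is obtained by rerunning the proof of Theorem 5 on the new idempotents, with the single structural change you identify --- for $p=12r+1$ we have $-1\in\mathcal{Q}_p$, so $\mu_{-1}$ fixes $Q(X)$ and $N(X)$ and the duality computation in (vi) swaps the indices, giving $\mathcal{E}_1^\perp=\mathcal{D}_2$ and $\mathcal{E}_2^\perp=\mathcal{D}_1$ rather than $\mathcal{E}_i^\perp=\mathcal{D}_i$. One immaterial quibble: carrying out your computation in (ii) actually yields $K(X)=J(X),\,7J(X),\,4J(X)$ for $r\equiv 0,1,2 \pmod 3$ respectively (not $J(X),4J(X),7J(X)$ in that order), but since the theorem only asserts that $K(X)$ is a suitable element of $\{J(X),4J(X),7J(X)\}$, nothing in your argument is affected.
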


Let $\mathcal {D}_1$ and $\mathcal {D}_2$ be the quadratic residue codes defined above. In the following, we discuss the extension of $\mathcal {D}_i$ denoted as $\widehat{\mathcal {D}}_i$ for $i=1,2$. Define $\widehat{\mathcal {D}}_i=\{(c_\infty, c_0,c_1,\ldots, c_{p-1})|(c_0,c_1,\ldots,c_{p-1})\in \mathcal {D}_i, c_\infty+c_0+\cdots+c_{p-1}\equiv 0 ({\rm mod}9)\}$. For $i=1,2$, $\widehat{\mathcal {D}}_i$ is the\emph{ extended quadratic residue code} of $\mathcal {D}_i$ with length $p+1$ over $R$.

From the fact that, for $i=1,2$, the sum of the components of any codeword in $\mathcal {E}_i$ is $0$, we have the following lemma immediately.
\begin{lemma}
For $i=1,2$, let $G_i$ and $\widehat{G}_i$ be the generator matrices of the quadratic residue codes $\mathcal {E}_i$ and $\widehat{\mathcal {D}}_i$, respectively.\\
{\rm (i)}~If $p=12r-1$, then
$$\widehat{G}_i=\left(\begin{array}{cccc}8&8&\cdots&8\\ 0& & &\\\vdots& & \huge{G}_i &\\0& & & \end{array}\right),~~\widehat{G}_i=\left(\begin{array}{cccc}8&5&\cdots&5\\ 0& & &\\\vdots& & \huge{G}_i &\\0& & & \end{array}\right)~~and~~\widehat{G}_i=\left(\begin{array}{cccc}8&2&\cdots&2\\ 0& & &\\\vdots& & \huge{G}_i &\\0& & & \end{array}\right)$$
corresponding to $r=3k$, $r=3k+1$ and $r=3k+2$ respectively.\\
{\rm(ii)}~If $p=12r+1$, then
$$\widehat{G}_i=\left(\begin{array}{cccc}8&1&\cdots&1\\ 0& & &\\\vdots& & \huge{G}_i &\\0& & & \end{array}\right),~~\widehat{G}_i=\left(\begin{array}{cccc}8&7&\cdots&7\\ 0& & &\\\vdots& & \huge{G}_i &\\0& & & \end{array}\right)~~and~~\widehat{G}_i=\left(\begin{array}{cccc}8&4&\cdots&4\\ 0& & &\\\vdots& & \huge{G}_i &\\0& & & \end{array}\right)$$
corresponding to $r=3k$, $r=3k+1$ and $r=3k+2$ respectively.
\end{lemma}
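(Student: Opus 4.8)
The plan is to produce an explicit generator matrix for $\mathcal{D}_i$ and then extend it coordinate-by-coordinate. I would start from part (v) of Theorems 5 and 6, namely $\mathcal{D}_i=\mathcal{E}_i+(K(X))$, together with the identification (from part (ii) of those theorems and the idempotent list preceding Definition 2) of $K(X)$ as one of the idempotents $cJ(X)$, where $J(X)=1+X+\cdots+X^{p-1}$ and $c\in\{8,5,2\}$ when $p=12r-1$ and $c\in\{1,7,4\}$ when $p=12r+1$. The key structural observation is that $XJ(X)=J(X)$ in $R_p$, so $X\cdot K(X)=K(X)$ and hence the ideal $(K(X))$ collapses to the single $R$-span $R\cdot K$ of the constant vector $(c,c,\ldots,c)$ of length $p$. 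Consequently $\mathcal{D}_i=\mathcal{E}_i+R\cdot K$, and a generator matrix of $\mathcal{D}_i$ is obtained from $G_i$ simply by adjoining the one row $(c,c,\ldots,c)$ on top.

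Next I would pass to the extension. By definition $\widehat{\mathcal{D}}_i$ appends a coordinate $c_\infty$ with $c_\infty\equiv-(c_0+\cdots+c_{p-1})\ (\mathrm{mod}\ 9)$, and this appended entry is uniquely determined by the projection onto the original $p$ coordinates; hence extending a generating set of $\mathcal{D}_i$ yields a generating set of $\widehat{\mathcal{D}}_i$. For each row of $G_i$, which is a codeword of $\mathcal{E}_i$, the remark preceding the statement gives component sum $0$, so its appended entry is $0$; these rows therefore produce exactly the lower block consisting of a zero column followed by $G_i$. For the adjoined top row $(c,\ldots,c)$ the component sum is $cp$, so its appended leading entry is $c_\infty\equiv-cp\ (\mathrm{mod}\ 9)$, while its remaining entries are the constant $c$ read off in each subcase.

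It then remains to show $-cp\equiv 8\ (\mathrm{mod}\ 9)$ uniformly. I would argue via idempotency: since $K(X)=cJ(X)$ is an idempotent of $R_p$ and $J(X)^2=pJ(X)$, squaring gives $c^2p\equiv c\ (\mathrm{mod}\ 9)$, and because each listed $c$ is a unit modulo $9$ this forces $cp\equiv 1\ (\mathrm{mod}\ 9)$, whence $c_\infty\equiv-cp\equiv-1\equiv 8\ (\mathrm{mod}\ 9)$ in all six subcases. Substituting the value of $c$ for each residue class of $r$ then reproduces the six displayed matrices; the same computation could equally be done by direct reduction of $cp$ using $p=12r\mp1$.

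I do not expect a genuine obstacle here, as the computation is mechanical once the two structural facts are in place. The point requiring the most care is the reduction $(K(X))=R\cdot K$: it is precisely the collapse of the ideal to the span of a single constant vector that guarantees one extra row completes $G_i$ into a generator matrix of $\mathcal{D}_i$, so that $\widehat{G}_i$ has the claimed block shape with exactly one distinguished top row rather than several. I would confirm this against the size count $|\mathcal{D}_i|=81\,|\mathcal{E}_i|$ of part (iv), which gives $|R\cdot K|=81$ and $\mathcal{E}_i\cap(K(X))=\{0\}$, verifying that $G_i$ together with the single row $K$ indeed has the correct number of independent generators.
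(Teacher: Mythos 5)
Your proof is correct and takes essentially the same route as the paper, which deduces the lemma ``immediately'' from the decomposition $\mathcal{D}_i=\mathcal{E}_i+(K(X))$ and the fact that every codeword of $\mathcal{E}_i$ has component sum $0$. Your writeup simply supplies the details the paper leaves implicit, namely the collapse $(K(X))=R\cdot K$ coming from $XJ(X)=J(X)$, the size/intersection check, and the uniform verification $cp\equiv 1 \pmod 9$ (hence leading entry $-cp\equiv 8$) via idempotency of $cJ(X)$, which is a tidy replacement for the six case-by-case reductions.
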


When $p=12r+1$, for $i=1,2$, we define $\widetilde{D}_i$ to be the linear code of length $p+1$ over $R$ with the generator matrix
$$\left(\begin{array}{cccc}1&1&\cdots&1\\ 0& & &\\\vdots& & \huge{G}_i &\\0& & & \end{array}\right).$$
\begin{theorem}
Let $\mathcal {D}_i$ be the quadratic residue codes of length $p$ over $R$. The following hold\\
{\rm (i)}~If $p=12r-1$, then $\widehat{\mathcal {D}}_i$ are Euclidean self-dual for $i=1,2$. \\
{\rm (ii)}~If $p=12r+1$, then $\widehat{\mathcal {D}}_1^\perp=\widetilde{\mathcal {D}}_2$ and $\widehat{\mathcal {D}}_2^\perp=\widetilde{\mathcal {D}}_1$.
\end{theorem}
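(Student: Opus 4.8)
The plan is to prove both parts with the same two-step template: first deduce the required orthogonality relation between the relevant extended codes by reading off their generator matrices, and then force equality by a cardinality count. For the counting I would use that, by Proposition 3 together with the corresponding duality identity over $\mathbb{Z}_9$, every linear code $C$ of length $N$ over $R$ satisfies $|C|\cdot|C^{\perp}|=81^{N}$. The cardinalities I need are immediate from the earlier structural results: adding one overall parity coordinate gives $|\widehat{\mathcal{D}}_i|=|\mathcal{D}_i|=9^{p+1}=81^{(p+1)/2}$, while the code $\widetilde{\mathcal{D}}_i$ generated by the matrix with top row $(1,\ldots,1)$ over the block $G_i$ has size $81\cdot 9^{p-1}=9^{p+1}$.

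For part (i) I would realize $\widehat{\mathcal{D}}_i$ through the matrix $\widehat{G}_i$ of the generator-matrix lemma, whose rows are the extended top row $\mathbf{k}_i=(8,c,\ldots,c)$, with $c\in\{8,5,2\}$ according to $r\equiv 0,1,2 \pmod 3$, together with the rows $(0\mid \mathbf{g})$ for $\mathbf{g}$ ranging over a generating set of $\mathcal{E}_i$. Self-orthogonality of $\widehat{\mathcal{D}}_i$ is then equivalent to $\widehat{G}_i\widehat{G}_i^{t}=0$, and I would split this into three types of entries. The interior block equals $G_iG_i^{t}$ and vanishes because $\mathcal{E}_i$ is Euclidean self-orthogonal, which is part (vi) of the structural theorem for $p=12r-1$ (there $\mathcal{E}_i^{\perp}=\mathcal{D}_i\supseteq\mathcal{E}_i$). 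The mixed entries $\mathbf{k}_i\cdot(0\mid\mathbf{g})=c\sum_j g_j$ vanish because every codeword of $\mathcal{E}_i$ has coordinate sum $0$, the very fact used to justify the shape of $\widehat{G}_i$. Granting self-orthogonality, $|\widehat{\mathcal{D}}_i|=81^{(p+1)/2}$ and $|\widehat{\mathcal{D}}_i|\cdot|\widehat{\mathcal{D}}_i^{\perp}|=81^{p+1}$ give $|\widehat{\mathcal{D}}_i^{\perp}|=|\widehat{\mathcal{D}}_i|$, so $\widehat{\mathcal{D}}_i\subseteq\widehat{\mathcal{D}}_i^{\perp}$ upgrades to equality.

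For part (ii) the same bookkeeping establishes $\widetilde{\mathcal{D}}_2\subseteq\widehat{\mathcal{D}}_1^{\perp}$. The rows $(0\mid\mathbf{g}_2)$ of $\widetilde{\mathcal{D}}_2$ are orthogonal to the $\mathcal{E}_1$-rows of $\widehat{\mathcal{D}}_1$ because here $\mathcal{E}_2\subseteq\mathcal{D}_2=\mathcal{E}_1^{\perp}$ by parts (v) and (vi) of the structural theorem for $p=12r+1$, hence $\mathcal{E}_2\perp\mathcal{E}_1$; the cross terms against the top rows again vanish by the coordinate-sum-zero property; and the one remaining scalar product of the two top rows is $(1,\ldots,1)\cdot(8,c,\ldots,c)=8+cp$. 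Since $|\widetilde{\mathcal{D}}_2|=9^{p+1}=|\widehat{\mathcal{D}}_1^{\perp}|$, the containment becomes an equality, and interchanging the indices $1$ and $2$ gives $\widehat{\mathcal{D}}_2^{\perp}=\widetilde{\mathcal{D}}_1$.

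The step I expect to be the genuine obstacle is the single surviving top-row scalar product: the congruence $\mathbf{k}_i\cdot\mathbf{k}_i=8^{2}+pc^{2}\equiv 0\pmod 9$ in part (i) and $8+cp\equiv 0\pmod 9$ in part (ii). Everything else is forced by the structural theorems and the coordinate-sum-zero property, but these congruences have to be verified separately for $r\equiv 0,1,2\pmod 3$, matching the residues of $p$ and $p+1$ modulo $9$ against the precise constants $c$ furnished by the generator-matrix lemma. This residue bookkeeping, rather than any conceptual difficulty, is the delicate point on which the whole argument turns.
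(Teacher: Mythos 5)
Your template---row-by-row orthogonality of the generator matrices followed by the cardinality count $|C|\cdot|C^{\perp}|=81^{N}$---is exactly the paper's proof, and your bookkeeping for part (ii) and for all the cardinalities is sound. But the step you deferred as ``residue bookkeeping'' is not routine: in part (i) it \emph{fails} in two of the three cases. With $p=12r-1$ we have $p\equiv 8,2,5\pmod 9$ according as $r\equiv 0,1,2\pmod 3$, while the generator-matrix lemma gives $c=8,5,2$; hence $8^{2}+pc^{2}\equiv 0,\,6,\,3\pmod 9$ respectively, so the top row is self-orthogonal only when $r\equiv 0\pmod 3$. This is not an artifact of a bad choice of generator for $(K(X))$: by Theorem 1(v) every codeword of $\mathcal{D}_{i}$ has the form $e+aJ(X)$ with $e\in\mathcal{E}_{i}$, $a\in R$, and using $e\cdot e=0$, $\sum_{j}e_{j}=0$, $J\cdot J=p$, the parity-extended codeword has self-product $(ap)^{2}+a^{2}p=a^{2}p(p+1)$, which vanishes for all $a$ if and only if $9\mid p+1$, i.e. $r\equiv 0\pmod 3$. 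Concretely, for $p=11$ (so $r=1$) the codeword $(8,5,5,\ldots,5)\in\widehat{\mathcal{D}}_{1}$ has self-product $64+11\cdot 25=339\equiv 6\pmod 9$, so $\widehat{\mathcal{D}}_{1}$ is not even self-orthogonal. Thus your proof of (i) cannot be completed as written for $r\not\equiv 0\pmod 3$.

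You should know that this hole is inherited from the paper itself: its proof of (i) asserts self-orthogonality ``by the fact that the sum of the components of any codeword in $\mathcal{E}_{i}$ is zero,'' which disposes of the $\mathcal{E}_{i}$-rows and the mixed products but never of the product of the extended $K(X)$ row with itself---precisely the entry you isolated as the crux. With the paper's plain parity extension ($c_{\infty}=-\sum_{j}c_{j}$), Theorem 3(i) is simply false for $r\equiv 1,2\pmod 3$, and the same computation invalidates the paper's Example 2 (the claimed self-dual $[24,12,9]$ code over $\mathbb{Z}_{9}$ from $p=11$). Your instinct that this congruence was ``the genuine obstacle'' was exactly right; carrying out the check would have exposed the error. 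Part (ii), by contrast, is fine: $c=1,7,4$ against $p\equiv 1,4,7\pmod 9$ gives $8+cp\equiv 9,36,36\equiv 0\pmod 9$, so your argument there (which matches the paper's, except that the paper writes out only $r=3k$) is complete. To salvage (i) in the bad cases one must change the definition of the extension, e.g. append $c_{\infty}=\gamma\sum_{j}c_{j}$ with $\gamma^{2}p\equiv -1\pmod 9$ (solvable with $\gamma=\pm 2$ when $p\equiv 2$ and $\gamma=\pm 4$ when $p\equiv 5\pmod 9$); no proof can exist for the statement with the extension as defined.
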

\begin{proof}
If $p=12r-1$, by the fact that the sum of the components of any codeword in $\mathcal {E}_i$ is zero, we have $\widehat{\mathcal {D}}_i$ are Euclidean self-orthogonal for $i=1,2$. Furthermore, $|\mathcal {D}_i|=|\widehat{\mathcal {D}}_i|=9^{p+1}$ implying $\widehat{\mathcal {D}}_i$ is Euclidean self-dual for $i=1,2$.
\vskip 2mm
If $p=12r+1$, then $\mathcal {E}_1^\perp=\mathcal {D}_2$ and $\mathcal {E}_2^\perp=\mathcal {D}_1$. Hence the extended codewords arising from $\mathcal {E}_i$ are orthogonal to all codewords in either $\widehat{\mathcal {D}}_j$ or $\widetilde{\mathcal {D}}_j$ where $j\neq i$. We prove the case $r=3k$. The proofs of other cases are similar. Since the product of the vectors $(8,1,\ldots, 1)$ and $(1,1,\ldots,1)$ is $8+p\equiv 0({\rm mod}9)$, we have $\widehat{\mathcal {D}}_j^\perp\subseteq \widetilde{\mathcal {D}}_i$ where $j\neq i$. Furthermore, $|\mathcal {D}_i|=|\widehat{\mathcal {D}}_i|=|\widetilde{\mathcal {D}}_i|=9^{p+1}$ implying $\widehat{\mathcal {D}}_j^\perp=\widetilde{\mathcal {D}}_i$ where $j\neq i$.
\end{proof}
\section{Examples}
In this section, we give some examples to illustrate the main work in this paper. Let $A=\left(\begin{array}{cc}1 & 1\\-1 & 1  \\\end{array}\right)$ be an matrix of ${\rm GL}_2(\mathbb{Z}_9)$. Then we have that $AA^t=2I$. Let $\mathcal {C}$ be a Euclidean self-dual code of length $n$ over $R$ and $\Phi$ be the Gray map corresponding to $A$. Then, by Proposition 2,  $\Phi(\mathcal {C})$ is a Euclidean self-dual code of length $2n$ over $\mathbb{Z}_9$.
\begin{example}
By Lemma 12, there exists a Euclidean cyclic self-dual code of length $11$ over $R$. It is well known that
\begin{equation*}
X^{11}-1=(X-1)(X^5+3X^4+8X^3+X^2+2X-1)(X^5-2X^4-X^3+X^2-3X-1).
\end{equation*}
Let $g(X)=1-X$ and $f(X)=X^5+3X^4+8X^3+X^2+2X-1$. Then $X^{11}-1=f(X)f^*(X)g(X)$. Assume that $\mathcal {C}_1=\mathcal {C}_2=(f^*(X)g(X), 3f(X)f^*(X))$, then by Lemma 10 we have that the cyclic code $\mathcal {C}=(f^*(X)g(X), 3f(X)f^*(X))$ is Euclidean self-dual of length $11$ over $R$. As a $\mathbb{Z}_9$-module, $\mathcal {C}$ is generated by the following matrix
\begin{equation*}
\left(
\begin{array}{ccccccccccc}
u & 2u & 5u & 2u & u & 6u & u & 0 & 0 & 0 & 0 \\
0 & u & 2u & 5u & 2u & u & 6u & u & 0 & 0 & 0 \\
0 & 0 & u & 2u & 5u & 2u & u & 6u & u & 0 & 0 \\
0 & 0 & 0 & u & 2u & 5u & 2u & u & 6u & u & 0 \\
0 & 0 & 0 & 0 & u & 2u & 5u & 2u & u & 6u & u \\
6u & 6u & 6u & 6u & 6u & 6u & 6u & 6u & 6u & 6u & 6u \\
1-u & 2-2u & 5-5u & 2-2u & 1-u & 6-6u & 1-u & 0 & 0 & 0 & 0 \\
0 &1-u & 2-2u & 5-5u & 2-2u & 1-u & 6-6u & 1-u & 0 & 0 & 0 \\
0 & 0 &1-u & 2-2u & 5-5u & 2-2u & 1-u & 6-6u & 1-u & 0 & 0 \\
0 & 0 & 0 &1-u & 2-2u & 5-5u & 2-2u & 1-u & 6-6u & 1-u & 0 \\
0 & 0 & 0 & 0 &1-u & 2-2u & 5-5u & 2-2u & 1-u & 6-6u & 1-u \\
6-6u & 6-6u & 6-6u  & 6-6u & 6-6u & 6-6u & 6-6u & 6-6u & 6-6u & 6-6u & 6-6u \\
\end{array}
\right).
\end{equation*}
By Proposition 2, we have that $\Phi(\mathcal {C})$ is a $\mathbb{Z}_9$-Euclidean self-dual code of length $22$ with minimum Hamming weight $5$. The Hamming weight distribution of $\Phi(\mathcal {C})$ is \\
$1+264x^5+1056x^6+7920x^7+32340x^8+81400x^9+152064x^{10}+236064x^{11}+1324224x^{12}
+8450640x^{13}+43501920x^{14}+188818080x^{15}+667663524x^{16}+1900455216x^{17}+
4216439920x^{18}+7043034240x^{19}+8466532800x^{20}+6507959040x^{21}+2336368896x^{22}$.
\end{example}

\begin{example}
Let $p=11$. We consider the quadratic residue codes of length $11$ over $R$. By the definitions of $Q(X)$ and $N(X)$, we have that $Q(X)=X+X^3+X^4+X^5+X^9$ and $N(X)=X^2+X^6+X^7+X^8+X^{10}$. Since $11=12\times 1-1$, by the Definition 2(ii), it follows that
\begin{equation*}
\begin{split}
\mathcal {D}_1&=(u(3+6X+8X^2+6X^3+6X^4+6X^5+8X^6+8X^7+8X^8+6X^9+8X^{10}) \\
              &+(1-u)(3+8X+6X^2+8X^3+8X^4+8X^5+6X^6+6X^7+6X^8+8X^9+6X^{10}))
 \end{split}
\end{equation*}
\begin{equation*}
\begin{split}
\mathcal {D}_2&=(u(3+8X+6X^2+8X^3+8X^4+8X^5+6X^6+6X^7+6X^8+8X^9+6X^{10}) \\
              &+(1-u)(3+6X+8X^2+6X^3+6X^4+6X^5+8X^6+8X^7+8X^8+6X^9+8X^{10}))
 \end{split}
\end{equation*}
\begin{equation*}
\begin{split}
\mathcal {E}_1&=(u(7+X+3X^2+X^3+X^4+X^5+3X^6+3X^7+3X^8+X^9+3X^{10}) \\
              &+(1-u)(7+3X+X^2+3X^3+3X^4+3X^5+X^6+X^7+X^8+3X^9+X^{10}))
 \end{split}
\end{equation*}
\begin{equation*}
\begin{split}
\mathcal {E}_2&=(u(7+3X+X^2+3X^3+3X^4+3X^5+X^6+X^7+X^8+3X^9+X^{10}) \\
              &+(1-u)(7+X+3X^2+X^3+X^4+X^5+3X^6+3X^7+3X^8+X^9+3X^{10}))
 \end{split}
\end{equation*}
are quadratic residue codes of length $11$ over $R$. From Lemma 13, the codes $\mathcal {E}_1$ and $\mathcal {E}_2$ can be regarded as the $\mathbb{Z}_9[X]$-modules, i.e.,
\begin{equation*}
\begin{split}
\mathcal {E}_1&=(u(7+X+3X^2+X^3+X^4+X^5+3X^6+3X^7+3X^8+X^9+3X^{10})) \\
              &\oplus ((1-u)(7+3X+X^2+3X^3+3X^4+3X^5+X^6+X^7+X^8+3X^9+X^{10}))
 \end{split}
\end{equation*}
\begin{equation*}
\begin{split}
\mathcal {E}_2&=(u(7+3X+X^2+3X^3+3X^4+3X^5+X^6+X^7+X^8+3X^9+X^{10})) \\
              &\oplus ((1-u)(7+X+3X^2+X^3+X^4+X^5+3X^6+3X^7+3X^8+X^9+3X^{10})),
 \end{split}
\end{equation*}
which implies that $\mathcal {E}_1$ and $\mathcal {E}_2$ have the following $\mathbb{Z}_9$-generator matrices respectively
$$G_1=\left(\begin{array}{c}uG_{11}\\ (1-u)G_{12} \end{array}\right)~~{\rm and}~~G_2=\left(\begin{array}{c}uG_{21}\\ (1-u)G_{22} \end{array}\right),$$
where
\begin{equation*}
G_{11}=G_{22}=\left(
\begin{array}{ccccccccccc}
1 & 0 & 0 & 0 & 0 & 1 & 2 & 5 & 2 & 1 & 6 \\
0 & 1 & 0 & 0 & 0 & 3 & 7 & 8 & 2 & 5 & 1 \\
0 & 0 & 1 & 0 & 0 & 8 & 1 & 2 & 6 & 1 & 8 \\
0 & 0 & 0 & 1 & 0 & 1 & 1 & 6 & 4 & 7 & 7 \\
0 & 0 & 0 & 0 & 1 & 2 & 5 & 2 & 1 & 6 & 1 \\
\end{array}
\right),
\end{equation*}
\begin{equation*}
G_{12}=G_{21}=\left(
\begin{array}{ccccccccccc}
1 & 0 & 0 & 0 & 0 & 1 & 6 & 1 & 2 & 5 & 2 \\
0 & 1 & 0 & 0 & 0 & 7 & 7 & 4 & 6 & 1 & 1 \\
0 & 0 & 1 & 0 & 0 & 8 & 1 & 6 & 2 & 1 & 8 \\
0 & 0 & 0 & 1 & 0 & 1 & 5 & 2 & 8 & 7 & 3 \\
0 & 0 & 0 & 0 & 1 & 6 & 1 & 2 & 5 & 2 & 1 \\
\end{array}
\right).
\end{equation*}
Let $\widehat{\mathcal {D}}_1$ and $\widehat{\mathcal {D}}_2$ be the extensions of $\mathcal {D}_1$ and $\mathcal {D}_2$ respectively. Then, by Lemma 16, we have that $\widehat{\mathcal {D}}_1$ and $\widehat{\mathcal {D}}_2$ have the following matrices as their generator matrices respectively
$$\widehat{G}_1=\left(\begin{array}{cccc}8&5&\cdots&5\\ 0& & &\\\vdots& & \huge{G}_1 &\\0& & & \end{array}\right),~~~~\widehat{G}_2=\left(\begin{array}{cccc}8&5&\cdots&5\\ 0& & &\\\vdots& & \huge{G}_2 &\\0& & & \end{array}\right).$$
The codes $\widehat{\mathcal {D}}_1$ and $\widehat{\mathcal {D}}_2$ are Euclidean self-dual of lengths $12$ over $R$. Furthermore, they are equivalent to each other. From Proposition 2, $\Phi(\widehat{\mathcal {D}}_1)$ and $\Phi(\widehat{\mathcal {D}}_2)$ are $\mathbb{Z}_9$-Euclidean self-dual codes of length $24$ with minimum Hamming weight $9$. The Hamming weight distribution of $\Phi(\widehat{\mathcal {D}}_1)$ is \\
$1+5632x^{9}+63360x^{11}+720912x^{12}+4580928x^{13}+30739104x^{14}+164535360x^{15}+730121040x^{16}
+2756179008x^{17}+8597448640x^{18}+21680524800x^{19}+43367140080x^{20}+66118443072x^{21}
+72092601504x^{22}+50166642240x^{23}+16719790800x^{24}.$
\par
Since the number of codewords in $\Phi(\widehat{\mathcal {D}}_1)$ equals $9^{12}$, the Singleton bound for the
minimal distance $d$ of $\Phi(\widehat{\mathcal {D}}_1)$  is $d\leq 24-\log_{9}^{|\Phi(\widehat{\mathcal {D}}_1)|}+1=13$.
Let $w(\Phi(\widehat{\mathcal {D}}_1),\leq 12)$ be the set of codewords
in $\Phi(\widehat{\mathcal {D}}_1)$ having Hamming weight at most $12$.
Then a direct calculation shows that
$\frac{|w(\Phi(\widehat{\mathcal {D}}_1),\leq 12)|}{|\Phi(\widehat{\mathcal {D}}_1)|}\leq 2.8\times 10^{-6}$.
\end{example}

\vskip 3mm \noindent {\bf Acknowledgments}  \emph{This research is supported by the National Key Basic Research Program of China (Grant No. 2013CB834204), and the National Natural Science Foundation of China (Grant No. 61171082).}






\end{document}